\pdfoutput=1
\documentclass[12pt]{article}

\usepackage[a4paper,margin=1in]{geometry}
\usepackage[utf8]{inputenc}
\usepackage[T1]{fontenc}
\usepackage{mathptmx}
\usepackage{amsmath,amssymb,amsthm,bm}
\usepackage{algorithm}
\usepackage{algpseudocode}
\usepackage{enumitem}
\usepackage{float}
\usepackage{microtype}
\usepackage{needspace}
\usepackage{etoolbox}
\usepackage{booktabs}
\usepackage{graphicx}
\usepackage[most]{tcolorbox}

\algrenewcommand\algorithmicrequire{\textbf{Input:}}
\algnewcommand\algorithmicvariables{\textbf{Variables:}}
\algnewcommand\Variables{\item[\algorithmicvariables]}
\algrenewcommand\algorithmicensure{\textbf{Output:}}

\newenvironment{breakablealgorithm}[1]{%
  \par\addvspace{1.2\baselineskip}%
  \refstepcounter{algorithm}%
  \noindent\hrule height .4pt\par\nobreak\vspace{2pt}%
  \noindent\textbf{Algorithm~\thealgorithm: #1}\par\nobreak\vspace{2pt}%
  \noindent\hrule height .4pt\par\nobreak\vspace{2pt}%
}{%
  \par\nobreak\vspace{.35\baselineskip}\noindent\hrule height .4pt\par\addvspace{.8\baselineskip}%
}

\newcommand{\K}{K}
\newcommand{\A}{\K[x_1]}
\newcommand{\xt}{\bm{\tilde{x}}}

\newcommand{\basisE}{\mathbf{E}}
\newcommand{\bmf}{\bm{f}}
\newcommand{\bmg}{\bm{g}}
\newcommand{\bmh}{\bm{h}}
\newcommand{\bmk}{\bm{k}}
\newcommand{\bmr}{\bm{r}}
\newcommand{\bmS}{\bm{S}}
\newcommand{\bms}{\bm{s}}

\newcommand{\lm}{\operatorname{lm}}
\newcommand{\lcm}{\operatorname{lcm}}
\newcommand{\lc}{\operatorname{lc}}
\newcommand{\lt}{\operatorname{lt}}
\newcommand{\sig}{\operatorname{sig}}
\newcommand{\pot}{\prec_{\scriptscriptstyle \mathrm{POT}}}
\newcommand{\poly}{\operatorname{poly}}
\newcommand{\Mon}{\operatorname{Mon}}
\newcommand{\LT}{\operatorname{LT}}
\newcommand{\mult}{\operatorname{mult}}

\newtheorem{definition}{Definition}[section]
\newtheorem{lemma}[definition]{Lemma}
\newtheorem{proposition}[definition]{Proposition}
\newtheorem{theorem}[definition]{Theorem}

\AtBeginEnvironment{definition}{\Needspace{2\baselineskip}}
\AtBeginEnvironment{lemma}{\Needspace{2\baselineskip}}
\AtBeginEnvironment{proposition}{\Needspace{2\baselineskip}}
\AtBeginEnvironment{theorem}{\Needspace{2\baselineskip}}
\AtBeginEnvironment{corollary}{\Needspace{2\baselineskip}}
\AtBeginEnvironment{remark}{\Needspace{2\baselineskip}}

\title{Beyond F5 and GVW: The Proper-Cover Algorithm for\\ Fast Ideal Basis Computation}

\author{%
Sheng-Ming Ma\\
BeiHang University\\
\texttt{smmath@foxmail.com}\\
\texttt{masm@pku.org.cn}
\and
Yi Liu\\
BeiHang University\\
\texttt{liuyee@buaa.edu.cn}
\and
Zheng-Lin Jiao\\
BeiHang University\\
\texttt{regulusjiao@gmail.com}
}

\date{}

\begin{document}

\maketitle

\begin{abstract}
Gr\"obner basis computation incurs heavy computational overhead, especially under lexicographic order. F5 and its GVW variant dominate efficient field-based Gröbner basis solving. The proper basis algorithm offers a parameterized ideal computation framework without leveraging modern signature-based optimizations. This work presents the Proper-Cover algorithm for zero-dimensional polynomial ideals by combining GVW's cover optimization over signature with the proper basis theory. We generalize signature, cover, POT ordering, reduction and S-pair concepts to parameterized coefficients, design a two-phase algorithm with compatible factor construction and hungry refinement, and rigorously prove termination and output correctness. Accordingly, we propose a new framework for the efficient computation of polynomial ideal bases. Benchmark results show that Proper-Cover surpasses F5 under all monomial orderings and delivers clear speedups over GVW for lexicographic (plex) order.
\end{abstract}

\section{Introduction}

Since Buchberger introduced his celebrated algorithm in his seminal PhD thesis, the theory of Gröbner bases has become a standard tool in computer algebra. However, the computation of Gröbner bases is often plagued by high complexity, especially with respect to the lexicographic order.
Up to the present, Faugère’s F5 algorithm\cite{CLO,F5} and its extended variants remain the most efficient approaches for computing Gröbner bases in practice. The GVW algorithm\cite{GVW} revises and optimizes the cover mechanism inherent to F5, yielding drastically enhanced computational efficiency.

While Gröbner bases over a field have been generalized to settings over rings, especially principal ideal rings\cite{AdamsLoustaunau,Francis,Moller1988}, these ring-based generalizations have not yet been applied to improve the computation of polynomial ideals over a field.
The proper basis algorithm\cite{MaProperBasis} defines and computes an ideal basis over variables \(\xt=(x_2,\dots,x_n)\) with the least variable \(x_1\) acts as a parameter within the polynomial algebra \(\K[x_1][\xt]\). It outperforms Faugère's F4 algorithm, Buchberger's classical algorithm and Möller's algorithm.

\begin{theorem}[Main result]\label{thm:intro-main}
For every zero-dimensional input ideal \(I=\subset K[x_1][\xt]\), the Proper-Cover algorithm terminates and outputs pairs \((q_j,\mathrm B_{q_j})\) such that the eliminant \(\chi=\prod_jq_j\) and
\begin{equation*}
  \langle \LT(\pi_{q_j}(I))\rangle
  =
  \langle \LT(\mathrm{B}_{q_j})\rangle
  \quad\hbox{in }(K[x_1]/(q_j))[\xt]
  \quad\hbox{for every }j.
\label{eq:intro-main-leading-term-equality}
\end{equation*}
Consequently, \(\mathcal B:=\bigcup_j(\mathrm{B}_{q_j}\cup\{q_j\})\) is a proper basis of \(I\).
\end{theorem}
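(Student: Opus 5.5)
The proof splits into two parts: termination of each phase, and correctness of the leading-term equality over each factor ring \(K[x_1]/(q_j)\). The final claim then follows from the definition of a proper basis in \cite{MaProperBasis}.

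\textbf{Phase one: the factors.} The first phase computes the eliminant \(\chi\) of \(I\cap K[x_1]\), which is nonzero because \(I\) is zero-dimensional. It then builds the compatible factorization \(\chi=\prod_j q_j\) by gcd-based splitting whenever a leading coefficient is neither a unit nor zero modulo the current factor. Termination of this phase is easy: each split strictly lowers the degree of some factor, and \(\deg\chi\) is finite. Compatibility should give a Chinese-remainder decomposition \(K[x_1]/(\chi)\cong\prod_j K[x_1]/(q_j)\), with \(I\) corresponding to \(\bigoplus_j\pi_{q_j}(I)\). This reduces everything to a single factor \(q=q_j\).

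\textbf{Phase two: termination for a fixed factor \(q\).} Over \(R=K[x_1]/(q)\), every leading coefficient met during hungry refinement is, after splitting, either a unit or zero. So the generalized signatures, the POT order, covers and S-pairs behave as over a field. I would argue termination in the GVW style. Each newly added labelled pair \((\bmS,h)\) is not covered by the current set, so its pair (signature, leading monomial of \(h\)) is new and not divisible by earlier ones in the appropriate sense. By Dickson's lemma applied to the finitely many module components, together with the finiteness of the splitting of \(q\), only finitely many such pairs can be added. Any refinement of \(q\) during this phase restarts on strictly smaller-degree factors, so a well-founded induction on \(\deg q\) handles it.

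\textbf{Phase two: correctness.} I would prove a cover criterion generalized to \(R\). If every J-pair (S-pair) signature is covered by the output \(G\), then every signature \(T\) in the module \(\bigoplus_i R\,\basisE_i\) admits a standard representation. I would prove this by contradiction, taking the \(\pot\)-minimal signature \(T\) without one. Either \(T\) is a multiple of some \(\sig\) in \(G\) with a regular reduction, or it is an S-pair signature, which is covered by hypothesis. In both cases reducing yields a representation with smaller signature, contradicting minimality. This gives \(\langle\LT(\pi_q(I))\rangle=\langle\LT(\mathrm B_q)\rangle\), where \(\mathrm B_q\) is the polynomial part of \(G\).

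The delicate point is the cancellation step. Over \(R\), two leading coefficients can combine to give zero only when the unit/zero dichotomy holds. This is exactly what the hungry refinement must guarantee. I expect the main obstacle to be showing that refinement preserves the invariants: signatures remain consistent after projecting to \(R\), and previously established covers stay valid after \(q\) is split. I would first handle this by proving that covers are stable under the projection \(R\to K[x_1]/(q')\) for \(q'\mid q\), since leading terms with unit coefficients stay units.

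Given the equalities over every factor, \(\mathcal B\) is a proper basis of \(I\) by the definition and the main characterization of proper bases in \cite{MaProperBasis}.
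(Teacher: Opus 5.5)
Your proposal has a genuine gap: it argues about a different algorithm, and it takes for granted the facts that need proof. You assume the first phase outputs the eliminant \(\chi\) itself, and that the factors \(q_j\) come from gcd-splitting whenever a leading coefficient is a zero divisor. In the paper, Algorithm~\ref{alg:propercover-stage1} yields only a pre-eliminant \(\chi_0\in I\cap K[x_1]\). This is a multiple of \(\chi\), but its multiplicities can overshoot at primes dividing the least, cover or special multipliers collected in \(\Lambda\). The factorization \(\chi=\prod_j q_j\) is therefore built from two sources. One is the compatible factor \(\chi_{0,c}\) of Definition~\ref{def:multiplier-compatible}. The others are the powers \(p^{\mult_p(\chi_p)}\) returned by Algorithm~\ref{alg:hungry-refinement} for the remaining primes \(p\). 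The core of the theorem is \(\chi_{0,c}\mid\chi\) and \(\mult_p(\chi_p)=\mult_p(\chi)\), and your plan addresses neither.

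The step that actually fails is your unit-or-zero dichotomy over \(R=K[x_1]/(q)\). The eliminant need not be squarefree, so a factor is in general \(q=p^e\) with \(e\ge 2\). In \(K[x_1]/(p^e)\) the class of \(p\) is a nonzero nonunit, and no gcd-splitting separates \(p^e\) into coprime pieces. Hence the field-style GVW cover criterion over \(R\) has no foundation. For the same reason, your Dickson-lemma termination argument on (signature, leading monomial) data ignores coefficients and does not carry over to the PID setting.

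The paper never computes modulo \(q\). Algorithm~\ref{alg:hungry-refinement} works over \(K[x_1][\xt]\) with lcm-based \(S\)-pairs and admits only reductions and covers whose multipliers are prime to \(p\); in effect it works over the localization at \(p\). The missing idea is Lemma~\ref{lem:compatible-representation}. It is built from the \(S\)-pair representation of Lemma~\ref{lem:cover} and the relabelling in Lemma~\ref{lem:common-leading-monomial-cancellation}(2). It shows that every \(f\in I\) satisfies \(\lambda f=\sum_k a_kg_k+a_0\chi^\star\), with the leading-monomial condition and \(\gcd(\lambda,q)=1\). Two consequences finish the proof:
\begin{itemize}
\item Taking \(f=\chi\) gives \(\lambda\chi=a_0\chi_p\) with \(p\nmid\lambda\), hence the exact multiplicity.
\item Applying it to \(c\bar f\) with \(c=\chi/q\) and projecting by \(\pi_q\) makes \(\pi_q(\lambda c)\) a unit, which yields \(\langle\LT(\pi_q(I))\rangle=\langle\LT(\mathrm B_q)\rangle\).
\end{itemize}
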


In this paper we combine the GVW algorithm and the proper basis algorithm to develop the Proper-Cover algorithm. It delivers better computational performance than Faugère's F5 algorithm for all monomial orderings, and clearly outperforms the GVW algorithm under lexicographic (plex) ordering.

In Section 2, we extend the notions of signature, cover and POT monomial ordering from the setting of base fields to parameterized coefficients over a principal ideal domain (PID). Within this section, we formalize the definitions of proper basis and regular reduction, and introduce $S$-pairs and semi-$S$-pairs as extensions of classical $S$-polynomials.

In Section 3, we introduce our two-phase algorithm consisting of Algorithm \ref{alg:propercover-stage1} and Algorithm \ref{alg:hungry-refinement}. Relying on this two-phase algorithm, we further introduce the notion of compatible factors.

In Section 4, we establish theoretical soundness of our Proper-Cover algorithm: termination and correctness of its output proper basis. We additionally demonstrate that the compatible factors generated by Algorithm \ref{alg:propercover-stage1}, combined with the hungry refinement from Algorithm \ref{alg:hungry-refinement}, produce the valid  eliminant. The core insight in Lemma \ref{lem:cover} is that the generalized cover together with regular reduction over semi-$S$-pairs enable us to represent all standard $S$-pairs in terms of the prebasis and pre-eliminants output by the first-stage algorithm.

We conduct extensive benchmark tests in Section 5 to benchmark our Proper-Cover algorithm alongside the GVW and F5 algorithms. Empirical evidence verifies that our method surpasses F5 across every monomial ordering, and exhibits clear performance advantages over GVW when lexicographic (plex) momomial ordering is used.

\section{Proper Basis, Signatures, and Cover over \(K[x_1][\xt]\)}

\paragraph{Notation.}
Let \(K\) be a field and let \(\xt=(x_2,\ldots,x_n)\).  We treat \(K[x_1,x_2,\ldots,x_n]\) as \(K[x_1][\xt]\).  The ideal \(I=\langle f_1,\ldots,f_m\rangle\subset K[x_1][\xt]\) is zero-dimensional.  Set
\[
  \Mon(\xt)=\{\xt^\alpha:\alpha\in\mathbb N^{n-1}\}.
\]
For a nonconstant \(q\in K[x_1]\), let
\begin{equation}
  \pi_q:K[x_1][\xt]\longrightarrow (K[x_1]/(q))[\xt]
\label{eq:projection-mod-q}
\end{equation}
be the natural projection.  Fix a monomial order \(\prec\) on \(\Mon(\xt)\).  For
\[
  f=\sum_\alpha c_\alpha(x_1)\xt^\alpha\in R[\xt],
  \qquad
  R\in\{K[x_1],K[x_1]/(q)\},
\]
set \emph{leading monomial} \(\lm(f)\in\Mon(\xt)\), \emph{leading coefficient} \(\lc(f)\in R\), and \emph{leading term} \(\lt(f)=\lc(f)\lm(f)\).  For \(B\subset R[\xt]\), set \(\LT(B)=\{\lt(b): b\in B\setminus\{0\}\}\); thus \(\langle\LT(B)\rangle\) denotes the ideal  generated by the leading terms of the elements of \(B\) in \(\K[x_1][\xt]\).

\subsection{Proper Basis}

\begin{definition}[Eliminant, multiplicity, and proper basis]\label{def:proper-basis}
For a zero-dimensional polynomial ideal \(I\subset K[x_1][\xt]\), let \(\chi\) denote the monic generator of the principal ideal \(I\cap K[x_1]\); thus \(I\cap K[x_1]=(\chi)\).  We call \(\chi\) the \emph{eliminant} of \(I\).
For an irreducible factor \(p\) of a univariate polynomial \(a\in K[x_1]\setminus K\), define
\[
  \mult_p(a):=\max\{i\in\mathbb N:\ a\in(p^i)\}.
\]

A \emph{modular basis} with respect to a nonconstant divisor \(q\mid\chi\) is a finite set
\[
  \mathrm{B}_q\subset I_q:=\pi_q(I)\subset(K[x_1]/(q))[\xt]
\]
such that
\begin{equation}
  \langle \LT(I_q)\rangle
  =
  \langle \LT(\mathrm{B}_q)\rangle
  \quad\hbox{in }(K[x_1]/(q))[\xt].
\label{eq:proper-basis-leading-term-ideal}
\end{equation}
Let \(\Omega\) be a finite set of pairwise coprime nonconstant factors such that \(\chi=\prod_{q\in\Omega}q\).  If \(\mathrm{ B}_q\) is a modular basis with respect to \(q\), then we define the following set:
\[
  \mathcal B=\bigcup_{q\in\Omega}\bigl(\mathrm{B}_q\cup\{q\}\bigr)
\]
and call \(\mathcal B\) a \emph{proper basis} of \(I\).
\end{definition}

\subsection{Labelled pair and signature}

Let \(F=(f_1,\ldots,f_m)\).  It defines
\begin{equation}
  \varphi:(K[x_1][\xt])^m\to K[x_1][\xt],
  \qquad
  \varphi(\mathbf u)=\sum_{i=1}^m u_i f_i
  \quad\hbox{for }\mathbf u=(u_1,\ldots,u_m) \in (K[x_1][\xt])^m.
\label{eq:module-map}
\end{equation}
Let \((K[x_1][\xt])^m\) have \emph{standard basis} \(\basisE_1,\ldots,\basisE_m\).  A \emph{term} of \((K[x_1][\xt])^m\) has the form \(c_\alpha\xt^\alpha \basisE_i\), where \(c_\alpha\in K[x_1]\setminus\{0\}\).  Its \emph{coefficient} is \(c_\alpha\), and its \emph{monomial} is \(\xt^\alpha \basisE_i\).
A \emph{labelled pair} is an element
\[
  \bmf=(\mathbf u,f)\in(K[x_1][\xt])^m\times K[x_1][\xt]
  \quad \text{such that} \quad
  \varphi(\mathbf u)=f.
\]
Denote the polynomial part of \(\bmf\) by \(\poly(\bmf)=f\).  If \(f=0\), then \(\bmf\) is a \emph{syzygy}; otherwise it is an ideal element equipped with a module representation.

\begin{definition}[POT order, signature]\label{def:pot-order}
The \emph{POT order} \(\pot\) on term \(c_\alpha\xt^\alpha \basisE_i\) is defined by
\[
  c_\alpha\xt^\alpha \basisE_i\pot c_\beta\xt^\beta \basisE_j
  \quad\Longleftrightarrow\quad
  i<j,\ \hbox{or } i=j \hbox{ and } \xt^\alpha\prec\xt^\beta .
\]
For a labelled pair \(\bmf=(\mathbf u,f)\), define its \emph{signature} as the leading term of \(\mathbf{u}\) under the POT order by
\begin{equation}
  \sig(\bmf)=\lt_{\pot}(\mathbf u)=c_\alpha\xt^\alpha \basisE_i,\qquad c_\alpha\in K[x_1]\setminus\{0\}.
\label{eq:signature-definition}
\end{equation}
Thus the leading coefficient and leading monomial of \(\sig(\bmf)\) are
\[
  \lc(\sig(\bmf))=c_\alpha,
  \qquad
  \lm(\sig(\bmf))=\xt^\alpha \basisE_i.
\]
\end{definition}

\subsection{\(S\)-pair and semi-$S$-pair}
\begin{definition}[\(S\)-pair and semi-$S$-pair]\label{def:regular-proper-SPair}
Let \(\bmf=(\mathbf u,f)\) and \(\bmg=(\mathbf v,g)\) denote labelled pairs with nonzero polynomial parts.  We define \(M=\operatorname{lcm}(\lm(f),\lm(g))\) and set \(t_f=M/\lm(f)\), \(t_g=M/\lm(g)\).
Similarly, let \(L=\operatorname{lcm}(\lc(f),\lc(g))\), and define \(\lambda_f=L/\lc(f)\), \(\lambda_g=L/\lc(g)\).

We define
\begin{equation}
  \bm{S} (\bmf,\bmg)=\lambda_ft_f\bmf-\lambda_gt_g\bmg
\label{eq:proper s-polynomial}
\end{equation}
as the \emph{$S$-pair} of $\bmf$ and $\bmg$.
And we define its polynomial part $\poly(\bmS(\bmf,\bmg))$ as the \emph{$S$-polynomial} of $f$ and $g$ and denote it as $S(f,g)$.

We call  \(\lambda_ft_f\bmf\) a \emph{semi-$S$-pair} if \(\sig(\bm{S} (\bmf,\bmg))=\lambda_f t_f\sig(\bmf)\), and vice versa for \(\lambda_gt_g\bmg\). Otherwise we do not define any semi-$S$-pair for \(\bmf\) and  \(\bmg\) when
\[
  \sig(\bm{S} (\bmf,\bmg))
  \pot
  \max\{\lambda_f t_f\sig(\bmf),\lambda_g t_g\sig(\bmg)\},
\]
in which case we call $\bmS(\bmf,\bmg)$ a \emph{singular} $S$-pair .

In particular, when \(g\in K[x_1]\), we take \(\lm(g)=1\), \(M=\lm(f)\), \(\lc(g)=g\), \(L=\operatorname{lcm}(\lc(f),g)\), \(\lambda_f=L/\lc(f)\) and \(\lambda_g=L/g\).
Then
\[
  S(f,g)=\lambda_f(f-\lm(f))
  =\frac{g(f-\lm(f))}{\gcd(\lc(f),g)}.
\]
In this case we call \(\gcd(\lc(f),g)\) the \emph{special multiplier} of the $S$-pair.
\end{definition}

\subsection{Regular reduction and least multiplier}\label{subsec:proper-term-reduction}

\begin{definition}[Term reduction and least multiplier]\label{def:term-reductions}
Let \(
  \bmf=(\mathbf u,f), \bmg=(\mathbf v,g)\)
be labelled pairs, with \(f\ne0\).
Suppose that a term \(c_\alpha\xt^\alpha\) of the polynomial \(f\) satisfies \(\lm(g)\mid \xt^\alpha\).  The multipliers are
\begin{equation}
  \lambda=L/c_\alpha,\quad
  \mu=L/\lc(g), \quad \text{where} \quad L=\operatorname{lcm}(c_\alpha,\lc(g)).
\label{eq:least-multiplier-identity}
\end{equation}
Under the condition that $\sig(t\bmg)\pot\sig(\bmf)$ with the monomial \(t\) satisfying \(\xt^\alpha=t\lm(g)\), we use the following two kinds of \emph{regular term reductions}.
\begin{enumerate}[label=(\arabic*)]
  \item The \emph{proper term reduction} is
  \[
    \bmf'=\lambda \bmf-\mu t \bmg.
  \]
The term \(c_\alpha\xt^\alpha\) is cancelled in \(\poly(\bmf')\), and \(\lambda\) is the \emph{least multiplier}.
  \item For a nonconstant $p\in K[x_1]$, if we allow the above term reduction only when \(\gcd(\lambda,p)=1\), then we call it the \emph{hungry term reduction} with respect to \(p\).
\end{enumerate}

\end{definition}

\begin{definition}[Reduced labelled pair]\label{def:reduced-polynomial}
Let \(\bmf=(\mathbf u,f)\) be a labelled pair with \(f\in R[\xt]\), let \(B\) be a set of labelled pairs with nonzero polynomial parts in \(R[\xt]\), and let \(p\in K[x_1]\) be irreducible.
The labelled pair \(\bmf\) is \emph{properly (hungrily) reduced} with respect to \(B\) (and \(p\)) if \(\bmf\) admits no proper (hungry) term reduction by an element of \(B\) (with respect to \(p\)).
\end{definition}

\begin{definition}[Regular reduction]
\label{def:regular-reduction}
Let \(\bmf=(\mathbf u,f)\) be a labelled pair, and let
\( B=\{\bm{b}_j=(\mathbf u_j,b_j):1\le j\le s\}
\)
be a finite set of labelled pairs.  A \emph{regular reduction} of \(\bmf\) with respect to \(B\) is an identity of labelled pairs
\begin{equation}
\label{eq:regular-reduction}
\lambda\bmf=\sum_{j=1}^{s}q_j\bm{b}_j+\bmr
\end{equation}
with $\sig(q_j\bm{b}_j) \pot \sig(\bmf)$ for $1\le j\le s$, where \(\lambda\in K[x_1]\setminus\{0\}\), \(q_j\in K[x_1][\xt]\), and \(\bmr=(\mathbf v,r)\) is reduced with respect to \(B\).  Moreover, their polynomial parts are required to satisfy
\begin{equation}
  \lm(f)=
  \max\left\{
    \max_{1\le j\le s}\{\lm(q_jb_j)\},
    \lm(r)
  \right\}.
\label{eq:leading-monomial-condition}
\end{equation}
Here \(\lambda\) is a product of the least multipliers for the corresponding term reductions.
\end{definition}

A regular reduction can be a proper (hungry) reduction that consists of proper (hungry) term reductions.

\subsection{PID Cover Criterion}

For monomials \(\xt^{\alpha} \basisE_i\) and \(\xt^{\beta}\basisE_j\), the divisibility relation \(\xt^{\alpha}\basisE_i\mid \xt^{\beta}\basisE_j\) means that \(i=j\) and \(\xt^{\alpha}\mid \xt^{\beta}\) in \(\Mon(\xt)\).

\begin{definition}[proper cover and hungry cover]\label{def:coef-cover}
Let
\[
  \bmf=(\mathbf u,f),
  \qquad
  \bmg=(\mathbf v,g)
\]
be labelled pairs, with \(f\ne0\).  The pair \(\bmf\) is \emph{properly covered} by \(\bmg\) over the PID \(K[x_1]\) if the following conditions hold.
\begin{enumerate}[label=(\arabic*)]
  \item \(\lm(\sig(\bmg))\mid\lm(\sig(\bmf))\).
  \item Either \(g=0\), or \(\lm(\tau g)\prec\lm(f)\).
\end{enumerate}

Let \(\sig(\bmf)=c_f\xt^\alpha \basisE_i\) and \(\sig(\bmg)=c_g\xt^\beta \basisE_i\) with the coefficients \(c_f,c_g \in K[x_1] \).
Set
\[
  \nu=\frac{L}{c_g},
  \qquad
  L=\operatorname{lcm}(c_f,c_g).
\]
If besides the above two conditions for the proper cover, we further require that the following condition hold:
\begin{enumerate}
    \item[(3)] The multiplier
    \[
      \mu=\frac{L}{c_f}
    \]
    is coprime with \(p\in K[x_1]\).
\end{enumerate}
then we say that the pair \(\bmf\) is \emph{hungrily covered} by \(\bmg\) with respect to \(p\).

\end{definition}

\section{The Proper-Cover Algorithm}

The algorithms below follow the signature framework for semi-$S$-pairs. Based on the input labelled pairs, we construct semi-$S$-pairs and then remove those properly (hungrily) covered according to Definition~\ref{def:coef-cover}.
We reduce the remaining semi-$S$-pairs by regular reduction.  A zero remainder contributes a syzygy with its signature, and a non-scalar remainder leads to more semi-$S$-pairs. Eventually the algorithm outputs a nonzero scalar remainder from which we can obtain the eliminant.

\subsection{ProperCoverReduction}

\begin{breakablealgorithm}{ProperCoverReduction}
\label{alg:propercover-stage1}
\begin{algorithmic}[1]
\Require \(F=\{f_1,\ldots,f_m\}\subset K[x_1][\xt]\), monomial order \(\prec\), POT order \(\pot\)
\Ensure pre-basis \(G_0\), pre-eliminant \(\chi_0\), multiplier set \(\Lambda\)
\Variables $G_0$: a list of pre-basis elements;
$S$: a list for $\sig(\bm{s})$, where $\bm{s} $ is a syzygy found so far;
$SP$: a list of  semi-$S$-pairs $\lambda t\bmg$.
$\Lambda$: a list of multipliers used for the proper cover and proper reduction;
\State initialize labelled pairs \(\bmf_i=(\basisE_i,f_i)\), $ f_i \in F$ for $1 \le i \le m$, discard zero inputs;
\State \(G_0\gets\{\bmf_i:\ f_i\ne0\}\), \(S\gets\varnothing\), \(SP\gets\varnothing\), \(\Lambda\gets\varnothing\), \(\chi_0\gets0\)
\State insert the initial principal syzygy signatures $\sig(f_j\basisE_i-f_i\basisE_j)$ for $1\le i<j\le m$ into \(S\)
\State form the initial semi-$S$-pairs and insert them into \(SP\)
\While{\(SP\ne\varnothing\)}
    \State select and remove $\lambda t\bmg$ of minimal signature
    \If{$\lambda t\bmg$ is properly covered by an element $\bmh$ of \(S\), \(G_0\), or \(SP\)}
    \State discard $\lambda t\bmg$
    \If{the multiplier $\alpha$ for the proper cover is not a constant }
        \State record $\alpha$ in \(\Lambda\);
    \EndIf
  \Else
  \State make a proper reduction on the semi-$S$-pair \(\lambda t\bmg\) to obtain a properly reduced remainder pair denoted as \(\bmr\);
  \State record all the least multipliers of the above proper reductions into $\Lambda$
  \If{\(\poly(\bmr)=0\)}
    \State \(S\gets S\cup\{\sig(\bmr))\}\)
  \Else
    \If{\(\poly(\bmr)\in K[x_1]\)}
      \State replace \(\chi_0\) by \(\gcd(\chi_0,\poly(\bmr))\)
    \Else
      \State add the labelled pair \(\bmr\) to \(G_0\);
      \State insert principal syzygy signatures $\sig(\poly(\bmr)\bmg-g\bmr)$ into \(S\) for each $\bmg \in G_0$ ;
      \State form semi-$S$-pairs using $\bmr$ and each $\bmg \in G_0$ and put them into \(SP\);
    \EndIf
  \EndIf
  \EndIf
\EndWhile
\State insert the special multipliers \(\{\gcd(\lc(g),\chi_0)\notin K\colon \bmg\in G_0\}\) into \(\Lambda\).
\State \Return \(G_0,\chi_0,\Lambda\)
\end{algorithmic}
\end{breakablealgorithm}

\subsection{Hungry Refinement}

The multipliers in Algorithm 1, i.e., the ProperCoverReduction algorithm, lead to the inaccuracy of the pre-eliminant $\chi_0$.
Hence we need to make a refinement of \(\chi_0\) in the following Algorithm 2, i.e., HungryRefinement algorithm, so as to obtain the eliminant $\chi$.
The only difference between the two algorithms is that we use the hungry cover and hungry reduction in Algorithm 2 instead of the proper cover and proper reduction in Algorithm 1.

\begin{breakablealgorithm}{HungryRefinement}
\label{alg:hungry-refinement}
\begin{algorithmic}[1]
\Require \(F=(f_1,\ldots,f_m)\subset K[x_1][\xt]\), an irreducible \(p\in K[x_1]\), monomial order \(\prec\), POT order \(\pot\)
\Ensure pre-basis \(G_p\), pre-eliminant \(\chi_p\), multiplier set \(\Lambda\)
\Variables $G_p$: a list of pre-basis elements;
$S$: a list for $\sig(\bm{s})$, where $\bm{s} $ is a syzygy found so far;
$SP$: a list of semi-$S$-pairs $\lambda t\bmg$.
$\Lambda$: a list of multipliers used for the hungry cover and hungry reduction;
\State initialize labelled pairs \(\bmf_i=(\basisE_i,f_i)\), $ f_i \in F$ for $1 \le i \le m$, discard zero inputs;
\State \(G_p\gets\{\bmf_i:\ f_i\ne0\}\), \(S\gets\varnothing\), \(SP\gets\varnothing\), \(\Lambda\gets\varnothing\), \(\chi_p\gets0\)
\State insert the initial principal syzygy signatures $\sig(f_j\basisE_i-f_i\basisE_j)$ for $1\le i<j\le m$ into \(S\)
\State form the initial semi-$S$-pairs and insert them into \(SP\)
\While{\(SP\ne\varnothing\)}
    \State select and remove $\lambda t\bmg$ of minimal signature
    \If{$\lambda t\bmg$ is hungrily covered by an element $\bmh$ of \(S\), \(G_p\), or \(SP\) with respect to \(p\)}
    \State discard $\lambda t\bmg$
    \If{the multiplier $\alpha$ for the hungry cover is not a constant }
        \State record $\alpha$ in \(\Lambda\);
    \EndIf
  \Else
  \State make a hungry reduction  with respect to \(p\) on the semi-$S$-pair \(\lambda t\bmg\) to obtain a hungrily reduced remainder pair \(\bmr\);
  \State record all the nonconstant least multipliers of the above hungry reduction into $\Lambda$
  \If{\(\poly(\bmr)=0\)}
    \State \(S\gets S\cup\{\sig(\bmr)\}\)
  \Else
    \If{\(\poly(\bmr)\in K[x_1]\)}
      \State replace \(\chi_p\) by \(\gcd(\chi_p,\poly(\bmr))\)
    \Else
      \State add the labelled pair \(\bmr\) to \(G_p\);
      \State insert principal syzygy signatures $\sig(\poly(\bmr)\bmg-g\bmr)$ into \(S\)  for each $\bmg \in G_P$ ;
      \State form semi-$S$-pairs using $\bmr$ and each $\bmg \in G_P$ and put them into \(SP\);
    \EndIf
  \EndIf
  \EndIf
\EndWhile
\State insert the special multipliers \(\{\gcd(\lc(g),\chi_p)\notin K\colon \bmg\in G_p\}\) into \(\Lambda\).
\State \Return \(G_p,\chi_p,\Lambda\)
\end{algorithmic}
\end{breakablealgorithm}

\subsection{Compatible factor}

The nonconstant multiplier sets $\Lambda$ produced by Algorithm 1 and Algorithm 2 are from the proper (hungry) reductions, from proper (hungry) covers, and from the special multipliers.
In this subsection we use $\Lambda$ to define the compatible factor of the pre-eliminant $\chi_0$ or $\chi_p$ that is also a factor of the eliminant $\chi$.

\begin{definition}[Compatible factor]\label{def:multiplier-compatible}
Let $\Lambda$ be the nonconstant multiplier set produced by Algorithm 1 or Algorithm 2.
For \(q\in\K[x_1]\), we denote:
\[\Theta(q) := \{ p \in K[x_1] \setminus K :
\mult_p(q)>0,~p\text{ is irreducible and coprime to each multiplier in } \Lambda \}.\]
If \(\Theta(\chi_0) \neq \varnothing\), we define the \emph{compatible factor} of \(\chi_0\) as
\[
  \chi_{0,c} := \prod_{p \in \Theta(\chi_0)} p^{\operatorname{mult}_p(\chi_0)}.
\]
If \(\Theta(\chi_p) \neq \varnothing\),we define the \emph{compatible factor} of \(\chi_p\) as
\[
  \chi_{p,c} := \prod_{p \in \Theta(\chi_p)} p^{\operatorname{mult}_p(\chi_p)}.
\]
Otherwise, we define \(\chi_{0,c}:= 1 \text{ or } \chi_{p,c} := 1\).
\end{definition}

\subsection{The Main Proper-Cover algorithm}

\begin{breakablealgorithm}{Main Proper-Cover algorithm}
\label{alg:ProperCover}
\begin{algorithmic}[1]
\Require \(F\subset K[x_1][\xt]\), monomial order \(\prec\), POT order \(\pot\)
\Ensure modular bases \(\mathrm{B}\) and pre-proper basis \(\mathcal B\)
\State \((G_0,\chi_0,\Lambda)\gets\Call{ProperCoverReduction}{F,\prec,\pot}\)
\State compute \(\chi_{0,c}\), set \(\widehat\chi\gets\chi_0\), and set \(\chi_q\gets\widehat\chi/\chi_{0,c}\)
\State \(\mathrm{B}\gets\varnothing\); if \(\chi_{0,c}\ne1\), insert the modular basis \((\chi_{0,c},\pi_{\chi_{0,c}}(G_0)\setminus\{0\})\) into \(\mathrm{B}\)
\State \(\chi_c \gets \chi_{0,c}\)
\While{\(\chi_q\notin K\)}
\State Select an irreducible factor \(p\) of \(\chi_q\) with minimal \(\mult_p(\chi_q)\).
\State  \((G_p,\chi_p,\Lambda_p)\gets\Call{HungryRefinement}{F,p,\prec,\pot}\)
\State set \(e\gets\mult_p(\chi_p)\); if \(e>0\), insert the modular basis \((p^e,\pi_{p^e}(G_p)\setminus\{0\})\) into \(\mathrm{B}\)
\State compute \(\chi_{p,c}\) from \((\chi_p,\Lambda_p)\)
\State \(\widehat\chi \gets \gcd(\widehat\chi,\chi_p)\), \(\chi_c \gets \operatorname{lcm}(\chi_c,\chi_{p,c})\)
\State \(\chi_q\gets\widehat\chi/\chi_c\)
\EndWhile
\State set \(\chi\gets\prod_{(q,\mathrm{B}_q)\in\mathrm{B}}q\) and \(\mathcal B\gets\bigcup_{(q,\mathrm{B}_q)\in\mathrm{B}}(\mathrm{B}_q\cup\{q\})\)
\State \Return \(\mathrm{B},\mathcal B\)
\end{algorithmic}
\end{breakablealgorithm}

\section{Correctness and Termination}

In this section we prove the correctness and termination of our algorithm.

\begin{definition}[Local basis below a signature $T$]
Let $G_0$ and $G_p$ be the pre-bases obtained in Algorithm \ref{alg:propercover-stage1} and Algorithm \ref{alg:hungry-refinement} respectively.
We name the subset of the pre-basis $G_0$ or $G_p$ the signatures of whose elements are smaller than a signature $T$ as a \emph{local basis below $T$}.
\end{definition}

The algorithm is incremental in nature because of the POT order.
Hence every labelled pair with signature strictly smaller than a signature $T$ can be properly (hungrily) reduced to a syzygy by the local basis below $T$.
This is the concept of ``signature Gr\"obner basis below $T$'' as in \cite{CLO,Francis} and ``strong Gr\"obner basis'' as in \cite{GVW}.

\subsection{PID Cover Representation}
\label{subsection:CoverReps}

In this subsection we prove in Lemma \ref{lem:cover} that the cover and regular reduction to a semi-$S$-pair in our algorithm yield a representation of the corresponding $S$-pair like the reduction of $S$-polynomials in Buchberger's algorithm.
\begin{lemma}\label{lem:cover-replacement}
Let \(\bmf=(\mathbf u,f)\) and \(\bmg=(\mathbf v,g)\) be labelled pairs that satisfy \(\sig(\bmf)=\sig(\bmg)\).
Then their properly (hungrily) reduced forms share the same polynomial part and hence differ by a syzygy.
\end{lemma}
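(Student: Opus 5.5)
The plan mirrors the GVW argument for equal signatures (cf.\ \cite{GVW}), adapted to coefficients in the PID \(K[x_1]\). Let \(\bmr_f=(\mathbf u',r_f)\) and \(\bmr_g=(\mathbf v',r_g)\) be the properly (hungrily) reduced forms of \(\bmf\) and \(\bmg\), computed by regular reductions with respect to the local basis below \(T:=\sig(\bmf)=\sig(\bmg)\). Every term reduction is regular, meaning \(\sig(t\bmb)\pot\sig(\cdot)\). Also, multiplying by a least multiplier \(\lambda\in K[x_1]\) only rescales the leading coefficient of the signature and leaves its monomial unchanged. Hence \(\lm(\sig(\bmr_f))=\lm(\sig(\bmr_g))=\lm(T)\), with leading coefficients \(\lambda_f c\) and \(\lambda_g c\), where \(c=\lc(T)\).

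\textbf{Step 1: form a pair with strictly smaller signature.} Put \(L=\lcm(\lambda_f,\lambda_g)\) and consider
\[
\bmh \;=\; \frac{L}{\lambda_f}\bmr_f-\frac{L}{\lambda_g}\bmr_g .
\]
Its signature terms cancel, so \(\sig(\bmh)\pot T\). By the incremental property stated just before this subsection, \(\bmh\) is properly (hungrily) reduced to a syzygy by the local basis below \(T\). That is, there is \(\lambda_h\ne0\) with
\[
\lambda_h\bmh=\sum_j q_j\bm b_j+\bms, \qquad \poly(\bms)=0,
\]
and the leading-monomial condition \eqref{eq:leading-monomial-condition} holds.

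\textbf{Step 2: show \(\poly(\bmh)=0\).} Suppose instead that \(h:=\poly(\bmh)\ne0\).
\begin{enumerate}[label=(\roman*)]
\item By \eqref{eq:leading-monomial-condition}, \(\lm(h)=\lm(q_jb_j)\) for some \(j\). So \(\lm(b_j)\) divides \(\lm(h)\), with signature \(\sig(q_j\bm b_j)\pot\sig(\bmh)\pot T\).
\item The monomial \(\lm(h)\) occurs in \(r_f\) or in \(r_g\); say in \(r_f\), with coefficient \(c'\ne0\).
\item Then \(\bm b_j\) admits a regular term reduction of that term of \(r_f\), since its signature lies strictly below \(T=\lm\)-level of \(\sig(\bmr_f)\). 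This contradicts \(\bmr_f\) being reduced.
\end{enumerate}
The same applies to \(r_g\). Hence \(\frac{L}{\lambda_f}r_f=\frac{L}{\lambda_g}r_g\).

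\textbf{Step 3: conclude.} After normalizing the unit/\(K[x_1]\)-scalar, the polynomial parts agree, and \(\bmr_f,\bmr_g\) differ by the syzygy \(\bmh\).

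\textbf{Main obstacle.} In the PID setting, "share the same polynomial part" can only hold up to the scalar factor \(L/\lambda_f\) versus \(L/\lambda_g\). I would prove the statement in the form "equal up to a nonzero factor in \(K[x_1]\)", and note that this is the sense used in Lemma~\ref{lem:cover}.

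The hungry case is the delicate part. There, a term of \(r_f\) may stay unreduced because the least multiplier is divisible by \(p\), so step (iii) fails. I would try to handle it by working in the localization \(K[x_1]_{(p)}\), where hungry reductions are exactly the reductions with unit multipliers. Reduced then means no leading-monomial divisibility by an element whose leading coefficient is a \(p\)-unit relative to \(c'\), and the contradiction in step (iii) has to be rerun with that notion.

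I would also check that the representation in Step 1 really contains a reducer with a unit-compatible coefficient. I expect this to be the main difficulty, and it may need the hungry-cover condition (3) to guarantee it.
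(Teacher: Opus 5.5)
Your proper-case argument is correct, granting the incremental property that the paper also only asserts. It uses the paper's idea but supplies the step the paper omits. The paper merely observes that $\bmf-\bmg$ has signature below $T$, so it reduces to a syzygy by the local basis below $T$, and then declares the conclusion. Your Step~2 applies this to the normalized difference $\bmh$ of the \emph{reduced} forms and uses reducedness of $r_f,r_g$ at $\lm(h)$, with $t=\lm(q_j)$ supplying the regularity condition; that is what actually makes it a proof. Your scalar caveat is necessary, not cosmetic. Take $\bmg=\bmf$ with $f=x_2+x_3$ ($x_2\succ x_3$) and two lower-signature reducers with polynomial parts $x_1x_2$ and $(x_1-1)x_2$. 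The two possible proper reductions give $x_1x_3$ and $(x_1-1)x_3$, so only ``equal up to a factor in $K[x_1]\setminus\{0\}$'' can be proved.

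The genuine gap is the hungry case, which you leave open. It cannot be closed by localizing at $p$ or by invoking hungry-cover condition~(3), because the hungry claim is false. Take $p=x_1$ and inputs $f_1=x_1x_2$, $f_2=x_2+x_3$; further generators of higher index can make the ideal zero-dimensional without changing anything below $\basisE_2$. Put $\bm b=(\basisE_1,x_1x_2)$, $\bmf=(\basisE_2,x_2+x_3)$ and $\bmg=\bmf+\bm b=(\basisE_1+\basisE_2,(1+x_1)x_2+x_3)$. Then $\sig(\bmf)=\sig(\bmg)=\basisE_2$, and the local basis below $\basisE_2$ is $\{\bm b\}$. The only candidate term reduction is of the $x_2$-term by $\bm b$, and its least multiplier is $x_1$ for both coefficients $1$ and $1+x_1$. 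So both pairs are already hungrily reduced. Yet $x_2+x_3$ and $(1+x_1)x_2+x_3$ are not $K[x_1]$-proportional, and $\bmg-\bmf=\bm b$ is not a syzygy.

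This is exactly the failure you anticipated in step~(iii). The polynomial $h=-x_1x_2$ is hungrily reducible by $\bm b$ with multiplier $1$. Hungry reducedness of $r_f,r_g$, however, only says that their $x_2$-coefficients have $\mult_p$ smaller than $\mult_p(\lc(b))$, and two such coefficients may differ by a multiple of $\lc(b)$.

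In the hungry case your argument does give something weaker: grant Step~1 with $p\nmid\lambda_h$ and run Step~2 only at the leading monomial, and you get $\lm(r_f)=\lm(r_g)$ and $\mult_p(\lc(r_f))=\mult_p(\lc(r_g))$. Full uniqueness would require an extra normalization of coefficients modulo the $\lc(b_j)$, i.e.\ a changed statement. Finally, the use of this lemma in \eqref{eq:tmp-coverIdentity} only needs a representation of the difference of two equal-signature pairs by the local basis plus a syzygy. The incremental property supplies that directly, without any uniqueness of reduced forms.
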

\begin{proof}
Since the signature of $\bmf-\bmg$ is smaller than $\sig(\bmf)$, $\bmf-\bmg$ is a syzygy with respect to the local basis below \(\sig(\bmf)\).
Thus follows the conclusion.
\end{proof}
\begin{lemma}
\label{lem:cover}
Let $\lambda_f t_f\bmf$ be a semi-$S$-pair generated by labelled pairs $\bmf,\bmh$.
Then the cover and regular reduction to $\lambda_f t_f\bmf$ in Algorithm \ref{alg:propercover-stage1} and Algorithm \ref{alg:hungry-refinement} correspond to the following representation of $\bm{S} (\bmf,\bmh)$:
\begin{equation}
\label{eq:s-representation}
    \lambda \bm{S} (\bmf,\bmh)=\sum_{k=1}^s q_k\bmg_k+\bmr+\bm{s}
\end{equation}
such that the leading monomial condition \eqref{eq:leading-monomial-condition} holds without taking into account the syzygy $\bms$.
Here $\{\bmg_k\colon 1\le k\le s\}$ is the local basis below $\sig(\lambda_f t_f\bmf)$, and $q_k,r\in\K[x_1][\xt]$ for $1\le k\le s$.
\end{lemma}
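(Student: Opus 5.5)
We work from the definition $\bmS(\bmf,\bmh)=\lambda_f t_f\bmf-\lambda_h t_h\bmh$. Since $\lambda_f t_f\bmf$ is a semi-$S$-pair, we have $\sig(\bmS(\bmf,\bmh))=\lambda_f t_f\sig(\bmf)$, and the other summand has strictly smaller signature: $\sig(\lambda_h t_h\bmh)\pot\sig(\lambda_f t_f\bmf)$. Both $\bmh$ and $\bmf$ lie in the current pre-basis. So $-\lambda_h t_h\bmh$ is a single term $q\bmg_k$ with $\bmg_k=\bmh$ in the local basis below $\sig(\lambda_f t_f\bmf)$. Moreover $\lm(t_h h)=M=\lm(t_f f)$, which bounds it by the leading monomial of the reduced object. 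The plan is therefore to treat the two algorithmic outcomes separately, to derive in each case a representation of $\lambda\cdot\lambda_f t_f\bmf$, and then to subtract $\lambda\lambda_h t_h\bmh$. The leading-monomial bound \eqref{eq:leading-monomial-condition} then has to be checked with $\lm(S(f,h))\preceq M$ in mind: we write the $S$-pair representation with the term $\lambda\lambda_h t_h\bmh$ absorbed, so that the maximum of $\lm(q_k g_k)$ is taken up to $M$.

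\emph{Case 1: no cover, regular reduction.} Definition~\ref{def:regular-reduction} gives directly
\[
\lambda\,\lambda_f t_f\bmf=\sum_j q_j\bm b_j+\bmr ,\qquad \sig(q_j\bm b_j)\pot\sig(\lambda_f t_f\bmf).
\]
The $\bm b_j$ are elements of the pre-basis available when the pair is processed. Because the pair is selected with minimal signature (POT, incremental), these $\bm b_j$ belong to the local basis below $\sig(\lambda_f t_f\bmf)$. The first term cancelled by the reduction is the top term of $\lambda_f t_f f$, with monomial $M$. Subtracting $\lambda\lambda_h t_h\bmh$ gives \eqref{eq:s-representation} with $\bms=0$, and the leading monomial condition is inherited from \eqref{eq:leading-monomial-condition}.

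\emph{Case 2: the pair is covered by some $\bmg$} (an element of $S$, $G_0$/$G_p$, or $SP$). Write $\sig(\bmg)=c_g\xt^\beta\basisE_i$ and $\lambda_f t_f\sig(\bmf)=c\,\xt^\alpha\basisE_i$ with $\xt^\beta\mid\xt^\alpha$, and put $\tau=\xt^\alpha/\xt^\beta$. Let $L=\lcm(c,c_g)$, $\mu=L/c$ and $\nu=L/c_g$. Then $\mu\lambda_f t_f\bmf$ and $\nu\tau\bmg$ have the same signature. Lemma~\ref{lem:cover-replacement} then says that their reduced forms coincide up to a syzygy $\bms$. If $\bmg$ is a syzygy (from $S$), the reduced form of $\nu\tau\bmg$ is $0$, so $\mu\lambda_f t_f\bmf$ reduces by the local basis to $0$ plus a syzygy. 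If $\poly(\bmg)\ne0$, condition (2) gives $\lm(\tau g)\prec\lm(\lambda_f t_f f)$. Now $\bmg$ (or its reduced form, if it comes from $SP$ and has already been processed) is in the local basis. Hence
\[
\mu\lambda_f t_f\bmf=\nu\tau\bmg+\bigl(\text{regular reduction of the difference}\bigr)+\bms ,
\]
where the difference has signature strictly below $\sig(\lambda_f t_f\bmf)$. By the incremental ``basis below $T$'' property stated before the subsection, that difference reduces regularly with respect to the local basis. Collecting terms and subtracting $\mu\lambda_h t_h\bmh$ yields \eqref{eq:s-representation} with $\lambda=\mu$, which is exactly the recorded cover multiplier. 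In the hungry setting, condition (3) guarantees that $\mu$ is coprime to $p$.

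The main obstacle is the leading-monomial bookkeeping in Case 2. The pieces of the representation come from reducing a lower-signature difference, and their leading monomials must be controlled by $\lm(\lambda_f t_f f)=M$, while the syzygy $\bms$ (which may have large polynomial degree in its module part but zero polynomial part) is excluded, as the lemma allows. The approach is to use $\lm(\tau g)\prec M$ together with \eqref{eq:leading-monomial-condition} applied to the reduction of the difference. This gives that every $\lm(q_k g_k)\preceq M$, and the maximum is attained by $\lambda\lambda_h t_h h$ or $\lambda\lambda_f t_f f$.
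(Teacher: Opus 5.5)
\textbf{There is a genuine gap: the leading-monomial bookkeeping, which is the actual content of the lemma.}

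In both cases you take a representation of $\lambda\lambda_f t_f\bmf$ (or of $\mu\lambda_f t_f\bmf$) and simply subtract the $\bmh$-term. But $\lambda_f\lc(f)=L=\lambda_h\lc(h)$ and $t_f\lm(f)=M=t_h\lm(h)$. So $\lt(\lambda_f t_f f)=\lt(\lambda_h t_h h)$, and therefore $\lm(S(f,h))\prec M$. Your right-hand side, however, still contains two terms with leading monomial $M$:
\begin{itemize}
\item the term $-\lambda\lambda_h t_h\bmh$;
\item the reducer term $q_1\bm{b}_1$ with $\lt(q_1b_1)=\lambda\lt(\lambda_f t_f f)$. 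In Case 2 this is the term of the reduction that absorbs the top term of $\mu\lambda_f t_f f$, which must exist since $\lm(\tau g)\prec M$.
\end{itemize}
So \eqref{eq:leading-monomial-condition} for \eqref{eq:s-representation}, i.e.\ $\lm(S(f,h))=\max\{\max_k\lm(q_kg_k),\lm(r)\}$, fails. The bound you settle for (every $\lm(q_kg_k)\preceq M$, with the maximum attained at $M$) says no more than the defining identity $\bmS(\bmf,\bmh)=\lambda_f t_f\bmf-\lambda_h t_h\bmh$ itself. It is also useless where the lemma is applied. In Lemma~\ref{lem:compatible-representation} each $S(f_k,f_m)$ must be rewritten with every term strictly below $\xt^{\gamma_k}$; otherwise the well-ordering descent does not advance.

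\textbf{What is missing is the regrouping the paper performs in \eqref{eq:CoverIdentity}.}
\begin{itemize}
\item If the first reducer is $\bmh$ itself, the least multipliers are $1$ and $\lambda_h$. The first step then produces exactly $\bmS(\bmf,\bmh)$, and the rest is a regular reduction of $\bmS(\bmf,\bmh)$, to which \eqref{eq:leading-monomial-condition} applies directly.
\item Otherwise, group $-\lambda\lambda_h t_h\bmh$ with $q_1\bm{b}_1$. Both have signature below $\sig(\lambda_f t_f\bmf)$: the reducer term by regularity, and the $\bmh$-term because $\lambda_f t_f\bmf$ is the semi-$S$-pair side. Their leading terms cancel. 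Hence $q_1\bm{b}_1-\lambda\lambda_h t_h\bmh$ is a labelled pair of lower signature whose polynomial part has leading monomial $\prec M$.
\item Reduce that pair to a syzygy over the local basis below $\sig(\lambda_f t_f\bmf)$ and substitute. In Case 2, do the same with a representation of $\nu\tau\bmg$, which is already $\prec M$ by cover condition (2). This removes every term at $M$.
\end{itemize}

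Separately, your claim that $\lambda=\mu$ in Case 2 is too strong. The regular reductions of $\mu\lambda_f t_f\bmf-\nu\tau\bmg$, of $\nu\tau\bmg$, and of the regrouped pair each contribute their own least multipliers (the paper's $\eta,\rho$). So $\lambda$ is in general a product of these with $\mu$, not the recorded cover multiplier alone.
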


\begin{proof}
When the semi-$S$-pair $\lambda_f t_f\bmf$ is not covered, then we shall perform a regular reduction on it.
If the first step of the regular reduction is through $\bmh$ itself, then the result is exactly the $S$-pair $\bm{S} (\bmf,\bmh)$.
Hence the representation \eqref{eq:s-representation} holds for a regular reduction.
Now assume that the first step of regular reduction is not through $\bmh$ but a local basis element $\bmk$.
Then $\lm(\lambda_f t_ff)=\lcm(\lm(f),\lm(h))$ is divisible by $\lm(k)$ and hence by $\lcm(\lm(f),\lm(k))$.
Let us denote $\tau=\lcm(\lm(f),\lm(h))/\lcm(\lm(f),\lm(k))$.
If we denote the semi-$S$-pair generated by $\bmf$ and $\bmk$ as $\bmg$, then there exist least multipliers $\mu,\nu$ such that $\mu\sig(\lambda_f t_f\bmf)=\nu\tau\sig(\bmg)$.
According to Lemma \ref{lem:cover-replacement}, we have:
\begin{equation}
\label{eq:tmp-coverIdentity}
\eta\mu\lambda_ft_f\bmf=\rho\nu\tau\bmg+\sum_{k=1}^sq_k\bmg_k+\bms.
\end{equation}
Here $\bms$ is a syzygy.
The representation $\sum_{k=1}^sq_k\bmg_k$ accounts for both the regular reductions of $\mu\lambda_f t_f\bmf$ and $\nu\tau\bmg$ with multipliers $\eta,\rho$. The local basis elements $\{\bmg_k\}$ satisfy
\[
  \sig(q_k\bmg_k)\pot\sig(\mu\lambda_f t_f\bmf)=\sig(\nu\tau\bmg),
  \qquad 1\le k\le s.
\]

When $\lambda_f t_f\bmf$ is covered by a labelled pair $\bmg$ with cover multipliers $\mu,\nu$ and monomial $\tau$, then they also satisfy the identity \eqref{eq:tmp-coverIdentity}.
Without loss of generality, suppose that $\lt(\eta\mu\lambda_f t_ff)=\lt(q_1g_1)$.
We construct the identity
\[
  \eta\mu\bm{S} (\bmf,\bmh)
  =\rho\nu\tau\bmg-\eta\mu\lambda_h t_h\bmh
  +\sum_{k=1}^sq_k\bmg_k+\bms
\]
with $\lt(\lambda_f t_ff)=\lt(\lambda_h t_hh)$.
Since $q_1\bmg_1-\eta\mu\lambda_h t_h\bmh$ is a labelled pair whose signature is smaller than $\sig(t_f\bmf)$, hence follows the identity:
\begin{equation}
\label{eq:CoverIdentity}
\eta\mu\bm{S} (\bmf,\bmh)=\rho\nu\tau\bmg+(q_1\bmg_1-\eta\mu\lambda_h t_h\bmh)+\sum_{k=2}^sq_k\bmg_k+\bms.
\end{equation}
Since the signatures of $\bmg$ and $q_1\bmg_1-\eta\mu\lambda_h t_h\bmh$ are  both smaller than $\sig(t_f\bmf)$, they can be reduced to $0$ by the local basis below $\sig(t_f\bmf)$, which leads to a representations in terms of the local basis.
Thus follows the representation \eqref{eq:s-representation} if we substitute their representations into \eqref{eq:CoverIdentity}.
It is easy to see that the leading monomial condition \eqref{eq:leading-monomial-condition} holds.
\end{proof}

\subsection{Compatible factors and exact powers}
\label{subsection:compatible}
In this subsection we prove in Lemma \ref{lem:compatible-representation} that the compatible factor of the pre-eliminant $\chi_0$ obtained in Algorithm \ref{alg:propercover-stage1} is a factor of the eliminant $\chi$.
And the exact powers of the other factors of $\chi$ can be recognized in the pre-eliminant $\chi_p$ obtained in Algorithm~\ref{alg:hungry-refinement}.
\begin{lemma}
\label{lem:common-leading-monomial-cancellation}
Let \(F=\{f_j:1\le j\le s\}\subset \A[\xt]\setminus\A\) be a finite polynomial set. Suppose that all \(f_j\) have the same leading monomial \(\lm(f_j)=\xt^\alpha\in\Mon(\xt)\) for $1 \le j \le s$.
\begin{enumerate}[label=(\arabic*)]
  \item If \(f=\sum_{j=1}^{s}f_j\) satisfies \(\lm(f)\prec\xt^\alpha\), then there exist multipliers \(b,b_j\in\A\setminus\{0\}\), \(1\le j<s\), such that
    \(bf=\sum_{1\le j<s} b_j S(f_j,f_s)\) with $S(f_j,f_s)$ being $S$-polynomials.
  \item For each irreducible polynomial \(p\in\A\setminus K\), we can  relabell the elements of \(F\) so that the multiplier \(b\) is not divisible by \(p\).
\end{enumerate}
\end{lemma}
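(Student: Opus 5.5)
The plan is to mimic the classical field-case argument (the lemma in \cite{CLO} expressing a cancelling sum of polynomials with a common leading monomial through $S$-polynomials), with the difference that over the PID $\A$ the leading coefficients cannot be normalised, so a multiplier $b$ appears. Write $c_j=\lc(f_j)\in\A\setminus\{0\}$. Since all $f_j$ have leading monomial $\xt^\alpha$, the hypothesis $\lm(f)\prec\xt^\alpha$ is equivalent to $\sum_{j=1}^s c_j=0$. In particular $s\ge2$, because for $s=1$ we would have $c_1=0$. Because $\lm(f_j)=\lm(f_s)$, Definition~\ref{def:regular-proper-SPair} gives $t_{f_j}=t_{f_s}=1$. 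Hence, with $L_j=\lcm(c_j,c_s)$,
\[
S(f_j,f_s)=\frac{L_j}{c_j}f_j-\frac{L_j}{c_s}f_s .
\]

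For part (1), I will look for $b$ and $b_j$ with $b_j\,L_j/c_j=b$ for every $j<s$, so that the coefficient of each $f_j$, $j<s$, matches $b$. This requires $L_j/c_j=c_s/\gcd(c_j,c_s)$ to divide $b$. I therefore choose
\[
b:=\lcm_{1\le j<s}\frac{c_s}{\gcd(c_j,c_s)},\qquad b_j:=\frac{b\,c_j}{L_j}=\frac{b\gcd(c_j,c_s)}{c_s}\in\A\setminus\{0\}.
\]
Note that $b$ divides $c_s$. It then remains to check the coefficient of $f_s$ in $\sum_{j<s}b_jS(f_j,f_s)$. That coefficient is $-\sum_{j<s}b_jL_j/c_s=-\frac{b}{c_s}\sum_{j<s}c_j=-\frac{b}{c_s}(-c_s)=b$, using $\sum_j c_j=0$. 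This gives $bf=\sum_{j<s}b_jS(f_j,f_s)$, and all multipliers are nonzero because every $c_j$ is nonzero.

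For part (2), the freedom lies in which element is placed last. Given an irreducible $p$, I relabel $F$ so that $f_s$ has minimal $\mult_p(c_s)$ among all the $c_j$ (taking $\mult_p=0$ when $p\nmid c_j$). Then for each $j<s$ we have $\mult_p(\gcd(c_j,c_s))=\min(\mult_p(c_j),\mult_p(c_s))=\mult_p(c_s)$. Consequently each quotient $c_s/\gcd(c_j,c_s)$ is coprime to $p$, and so is their least common multiple $b$.

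The only delicate point is this choice of the pivot $f_s$ in part (2). Everything else is the bookkeeping identity $\sum_j c_j=0$, which replaces the field-case fact that the normalised coefficients sum to zero. I would also remark that the construction keeps $b$ dividing $c_s$, which is what later arguments need in order to control the multipliers recorded in $\Lambda$.
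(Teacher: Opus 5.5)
Your proof is correct and complete. The paper gives no argument of its own for this statement: it only says that it is Lemma~3.11 of the earlier proper-basis paper. So your proposal is not a competing route but a self-contained substitute for that citation, and its content is what the statement needs.

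For (1), since $t_{f_j}=t_{f_s}=1$, the identity reduces to matching coefficients. This works once $b$ is a common multiple of the cofactors $\lcm(c_j,c_s)/c_j$. These cofactors agree with $c_s/\gcd(c_j,c_s)$ only up to a unit of $K$, which is harmless because you define $b_j$ through $L_j$ itself. The coefficient of $f_s$ then comes out right precisely because $\sum_j c_j=0$.

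For (2), choosing the pivot $f_s$ with $\mult_p(\lc(f_s))$ minimal makes every cofactor, and hence $b$, prime to $p$.

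Your explicit version makes transparent the one place the paper relies on this lemma. That is the relabelling step in the proof of Lemma~\ref{lem:compatible-representation}, where the lemma is applied to the polynomials $c_k\xt^{\alpha_k}f_k$ to obtain $\mult_t(b)=0$.

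Your closing remark overstates things slightly. The divisibility $b\mid \lc(f_s)$ is never used downstream; only $p\nmid b$ is needed there. The multipliers recorded in $\Lambda$ are controlled separately, through the definition of the compatible factor.
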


\begin{proof}
This is Lemma~3.11 of \cite{MaProperBasis}.
\end{proof}

\begin{lemma}
\label{lem:compatible-representation}
\label{prop:compatible-representation}
Let \(F\subset \A[\xt]\setminus\A\) be a finite polynomial set generating a zero-dimensional ideal \(I=\langle F\rangle\subset\A[\xt]\).
After we input $F$ into Algorithm \ref{alg:propercover-stage1}, suppose \(G_0=\{\bmg_k:1\le k\le s\}\) and \(\chi_0\) are the output pre-basis and pre-eliminant respectively with \(\bmg_k=(\mathbf u_k,g_k)\). Denote \(G=\{g_k:1\le k\le s\}\), and let \(\chi_{0,c}\) be the compatible factor of \(\chi_0\). Then for every \(f\in I\), there exist \(q_0,q_1,\ldots,q_s\in\A[\xt]\) and \(\lambda\in\A\setminus\{0\}\) with \(\gcd(\lambda,\chi_{0,c})=1\) such that
\begin{equation}
  \lambda f=\sum_{k=1}^{s}q_k g_k+q_0\chi_0 .
\label{eq:compatible-representation}
\end{equation}
Moreover,
\begin{equation}
  \lm(f)=
  \max\left\{
    \max_{1\le k\le s}\{\lm(q_k g_k)\},
    \lm(q_0)
  \right\}.
\label{eq:compatible-leading-condition}
\end{equation}
In particular, the eliminant \(\chi\in I\cap\A\) satisfies
\(
  \lambda\chi=q_0\chi_0\) with
  \(q_0\in\A\setminus\{0\}\) and \(\gcd(\lambda,\chi_{0,c})=1 \).
Consequently, \(\chi_{0,c}\mid\chi\).

If we input $F$ and an irreducible \(p\in K[x_1]\) into Algorithm~\ref{alg:hungry-refinement} with output \((G_p,\chi_p,\Lambda_p)\), then \eqref{eq:compatible-representation} and \eqref{eq:compatible-leading-condition} also hold for  \(G_p\) and \(\chi_p\) such that \(\gcd(\lambda,\chi_{p,c})=1\) in \eqref{eq:compatible-representation}. In particular, $\chi_{p,c} \mid \chi$.
\end{lemma}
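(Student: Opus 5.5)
The plan is to mimic the standard Buchberger-type argument with a minimal counterexample, tracking the multipliers. First I will establish the representation \eqref{eq:compatible-representation}, together with the leading monomial condition \eqref{eq:compatible-leading-condition}, for every \(f\in I\). Write \(f=\varphi(\mathbf u)\) as the polynomial part of a labelled pair \((\mathbf u,f)\). Then I will argue by induction on the signature of \(\mathbf u\) under \(\pot\), using the fact (from the incremental POT structure) that the local basis below any signature handles all pairs of smaller signature.

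Concretely, I would write \(f=\sum_k h_k g_k+h_0\chi_0\) using the input generators, which lie in \(G_0\) initially. Among all such representations with multiplier \(\lambda\) coprime to \(\chi_{0,c}\), I choose one minimizing \(M=\max_k\lm(h_kg_k)\), treating the \(\chi_0\)-term as having leading monomial \(\lm(h_0)\). If \(M=\lm(f)\) we are done. Otherwise the top terms cancel, and Lemma~\ref{lem:common-leading-monomial-cancellation}(1) rewrites the top part, times some \(b\), as a combination of \(S\)-polynomials \(S(g_j,g_{j'})\) and of special \(S\)-polynomials \(S(g_j,\chi_0)\). By part (2), applied to each irreducible \(p\mid\chi_{0,c}\), we can relabel so that \(p\nmid b\). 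Since there are finitely many such \(p\), I would combine the resulting identities by a Bézout/CRT argument in \(K[x_1]\): the gcd of the admissible \(b\)'s is coprime to \(\chi_{0,c}\).

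Next, each \(S\)-pair \(\bmS(\bmg_j,\bmg_{j'})\) was either covered or regularly reduced in Algorithm~\ref{alg:propercover-stage1}. Lemma~\ref{lem:cover} then gives \(\lambda'\bmS=\sum q_k\bmg_k+\bmr+\bms\) with the leading monomial condition. Here \(\bmr\) is either \(0\), a scalar divisible by \(\chi_0\) (since \(\chi_0\) is the gcd of the scalar remainders), or an element of \(G_0\). The multiplier \(\lambda'\) is a product of least multipliers and cover multipliers, all recorded in \(\Lambda\), so it is coprime to \(\chi_{0,c}\) by the definition of \(\Theta\). For \(S(g_j,\chi_0)=\chi_0(g_j-\lm(g_j)\lc(g_j))/\gcd(\lc(g_j),\chi_0)\), the special multiplier is in \(\Lambda\), and the expression is visibly a multiple of \(\chi_0\) of lower leading monomial. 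Substituting all of these back yields a representation with strictly smaller \(M\) and multiplier still coprime to \(\chi_{0,c}\). This contradicts minimality, since the monomial order is a well-order.

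Then I would specialize to \(f=\chi\). Since \(\lm(\chi)=1\), condition \eqref{eq:compatible-leading-condition} forces every \(q_kg_k\) with \(g_k\notin\A\) to vanish (their leading monomials are nonconstant), so \(\lambda\chi=q_0\chi_0\) with \(q_0\in\A\). For each \(p\in\Theta(\chi_0)\), \(p\nmid\lambda\), so \(\mult_p(\chi)\ge\mult_p(\chi_0)\), which gives \(\chi_{0,c}\mid\chi\). The hungry case for Algorithm~\ref{alg:hungry-refinement} is verbatim, using \(\Lambda_p\) and hungry covers and reductions.

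The main obstacle I expect is the control of multipliers. Lemma~\ref{lem:cover} introduces extra factors \(\eta,\rho,\mu,\nu\), and Lemma~\ref{lem:common-leading-monomial-cancellation} only gives \(p\)-coprimality one prime at a time. I would handle the latter by gathering, for each \(p\in\Theta(\chi_0)\), a representation with \(p\nmid\lambda_p\), and then combining them with Bézout coefficients \(\sum a_p\lambda_p\) coprime to \(\chi_{0,c}\). The leading monomial bound is preserved because it holds for each summand.
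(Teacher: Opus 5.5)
Your proposal follows the paper's proof essentially step for step: write $f$ over the input generators, rewrite the cancelling top part via Lemma~\ref{lem:common-leading-monomial-cancellation}, represent each $S$-polynomial via Lemma~\ref{lem:cover} with multipliers recorded in $\Lambda$, descend by well-ordering, B\'ezout-combine the per-prime multipliers, and specialize to $\chi$. The differences are minor. You do the B\'ezout combination at every step of a minimal-counterexample argument, whereas the paper fixes one prime $t$, descends, and combines once at the end. You also treat $S(g_j,\chi_0)$ explicitly through the special multiplier (it is $\gcd(\lc(g_j),\chi_0)\,S(g_j,\chi_0)$, not $S(g_j,\chi_0)$ itself, that is a multiple of $\chi_0$), which the paper leaves implicit in ``repeat the procedure''. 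One small omission: a singular $S$-pair gets no semi-$S$-pair and is never covered or reduced by Algorithm~\ref{alg:propercover-stage1}. So your claim that every $S(g_j,g_{j'})$ was covered or regularly reduced needs the separate singular case, which the paper at least flags as also holding.
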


\begin{proof}
 Fix an irreducible factor \(t\) of \(\chi_{0,c}\). For \(F=\{f_k:1\le k\le m\}\), suppose that $f$ can be written as \(f=\sum_{k=1}^{m}h_kf_k\) with \(h_k\in\A[\xt]\) for $1 \le k \le m$. Set \(\xt^\beta:=\max_{1\le k\le m}\{\lm(h_kf_k)\}\). If \(\lm(f)=\xt^\beta\), the conclusion already holds. Now we assume \(\lm(f)\prec\xt^\beta\).

Let us denote \(\lt(h_k)=c_k\xt^{\alpha_k}\), where \(c_k\in\A\setminus\{0\}\). Without loss of generality,
suppose that \(\lm(\xt^{\alpha_k}f_k) = \xt^\beta\) holds for $1 \le k \le m$. And we have:
\begin{equation}
  f=\sum_{k=1}^{m}c_k\xt^{\alpha_k}f_k
  +\sum_{k=1}^{m}\bigl(h_k-\lt(h_k)\bigr)f_k .
\label{eq:compatible-first-decomposition}
\end{equation}

By Lemma~\ref{lem:common-leading-monomial-cancellation} (1),  there exist \(b,b_k\in\A\setminus\{0\}\) for \(1\le k<m\), that satisfy the identity:
\begin{equation}
  b\sum_{k=1}^{m}c_k\xt^{\alpha_k}f_k
  =
  \sum_{1\le k<m}
  b_k S(c_k\xt^{\alpha_k}f_k,c_t\xt^{\alpha_t}f_m)
  =
  \sum_{1\le k<m}
  b_k n_k\xt^{\beta-\gamma_k}S(f_k,f_m),
\label{eq:compatible-spair-combination}
\end{equation}
where \(\xt^{\gamma_k}:=\operatorname{lcm}(\lm(f_k),\lm(f_m))\)  and \(
  n_k:={\operatorname{lcm}(c_k\lc(f_k),c_m\lc(f_m))}
       /{\operatorname{lcm}(\lc(f_k),\lc(f_m))}\) for $1\le k<m$. Moreover, by Lemma~\ref{lem:common-leading-monomial-cancellation} (2), we can relabel the subscripts of the elements in $F$ such that $\operatorname{mult}_t(b) = 0$.

Each $S$-polynomial \(S(f_k,f_m)\) in \eqref{eq:compatible-spair-combination} is the polynomial part of an $S$-pair. When the $S$-pair is not singular, according to the representation \eqref{eq:s-representation} of $S$-pairs in Lemma \ref{lem:cover}, \(S(f_k,f_m)\) can be represented by the basis set $G$, and \(\lm(S(f_k,f_m))\prec\xt^{\gamma_k}\).
It is easy to see that when the $S$-pair is singular this conclusion also holds.
Combining these representations with the second summation in \eqref{eq:compatible-first-decomposition}, we obtain a representation
\[
  \mu b f=\sum_{k=0}^{s}a_k g_k,
  \qquad
  g_0:=\chi_0,
\]
where the quotient \(a_k\in\A[\xt]\) for $0 \le k \le s$ and \(\max_{0\le k\le s}\{\lm(a_kg_k)\}\prec\xt^\beta\). The multiplier \(\mu\in\A\setminus\{0\}\) is a product of  multipliers recorded in our algorithm. Since \(t\) is a factor of \(\chi_{0,c}\), none of these recorded multipliers is divisible by \(t\). Hence \(t\nmid \mu b\).

We repeat the above procedure on the new representation \( \mu b f=\sum_{k=0}^{s}a_k g_k\).  The repetition halts in
finite steps since the monomial ordering on $\Mon(\xt)$ is a well-ordering. Thus, for the fixed factor \(t\), we obtain \(
\lambda_t f=\sum_{k=1}^{s}q_{k,t}g_k+q_{0,t}\chi_0\) with \(t\nmid\lambda_t\), as well as the corresponding leading-monomial condition \eqref{eq:compatible-leading-condition}.

Let \(t\) range over all irreducible factors of \(\chi_{0,c}\), and set \(\lambda:=\gcd\{\lambda_t\}\). Then \(\gcd(\lambda,\chi_{0,c})=1\). Since \(K[x_1]\) is a PID, there exist \(d_t\in\A\) such that \(\lambda=\sum_{t}d_t\lambda_t\). Multiplying the above representations by \(d_t\) and summing over all these  \(t\), we get
\[
  \lambda f=\sum_{k=1}^{s}q_kg_k+q_0\chi_0,
  \qquad
  \gcd(\lambda,\chi_{0,c})=1.
\]
There is no monomial larger than \(\lm(f)\) in the right-hand side. So follows \eqref{eq:compatible-leading-condition}.

Finally, take \(f=\chi\). Since \(\chi\in\A\), according to \eqref{eq:compatible-leading-condition}, we have  \(\lambda\chi=q_0\chi_0\) with \(q_0\in\A\setminus\{0\}\) and \(\gcd(\lambda,\chi_{0,c})=1\). Hence \(\chi_{0,c}\mid\chi\). It is easy to see that all the above arguments apply to $G_p \text{ and } \chi_{p,c}$.
\end{proof}

\begin{proposition}
\label{prop:algorithm2-refinement}
Let \(p\) be an irreducible factor of \(\chi_0/\chi_{0,c}\) taken as an input of Algorithm~\ref{alg:hungry-refinement}. Suppose Algorithm~\ref{alg:hungry-refinement} outputs \((G_p,\chi_p,\Lambda_p)\).
Then $\mult_p(\chi_p)=\mult_p(\chi)$.
\end{proposition}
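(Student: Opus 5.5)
We prove $\mult_p(\chi_p)=\mult_p(\chi)$ by establishing the two inequalities $\mult_p(\chi)\ge\mult_p(\chi_p)$ and $\mult_p(\chi)\le\mult_p(\chi_p)$ separately.

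\emph{First inequality.} Every polynomial appearing in Algorithm~\ref{alg:hungry-refinement} is the polynomial part of a labelled pair $(\mathbf u,r)$ with $r=\varphi(\mathbf u)\in I$. So every nonzero scalar remainder $\poly(\bmr)\in\A$ lies in $I\cap\A=(\chi)$. Since $\chi_p$ is a gcd of such remainders, we get $\chi\mid\chi_p$, and hence $\mult_p(\chi)\le\mult_p(\chi_p)$. This step only needs the invariant that labelled pairs remain labelled pairs under semi-$S$-pair formation and hungry term reduction. That invariant is immediate, since both operations are $\A[\xt]$-linear combinations of labelled pairs.

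\emph{Second inequality.} Here we would use the hungry version of Lemma~\ref{lem:compatible-representation}, applied with $f=\chi$. The aim is a representation
\[
  \lambda\chi=\sum_k q_kg_k+q_0\chi_p
\]
satisfying the leading-monomial condition \eqref{eq:compatible-leading-condition} and $p\nmid\lambda$. Given this, all $q_kg_k$ with $g_k\notin\A$ have leading monomial at most $\lm(\chi)=1$, so they must vanish. Thus $\lambda\chi=q_0\chi_p$ with $p\nmid\lambda$, which yields $\mult_p(\chi_p)\le\mult_p(\chi)$.

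To obtain $p\nmid\lambda$, we rerun the proof of Lemma~\ref{lem:compatible-representation} with the fixed irreducible $t=p$. In that argument, the only multipliers entering $\lambda_t$ are of three kinds:
\begin{enumerate}[label=(\roman*)]
  \item the $b$ from Lemma~\ref{lem:common-leading-monomial-cancellation}, which by part (2) can be chosen with $p\nmid b$;
  \item the least multipliers of hungry term reductions, which are coprime to $p$ by the definition of hungry reduction;
  \item the cover multipliers of hungry covers, which are coprime to $p$ by condition (3) of Definition~\ref{def:coef-cover}.
\end{enumerate}
This is why Algorithm~\ref{alg:hungry-refinement} is run with respect to $p$. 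We never need $p$ to be compatible in the sense of Definition~\ref{def:multiplier-compatible}: the algorithm itself guarantees that no recorded reduction or cover multiplier is divisible by $p$. The hypothesis that $p$ divides $\chi_0/\chi_{0,c}$ serves only to ensure that $p\mid\chi$ is a relevant factor; the argument does not otherwise depend on it.

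\emph{Main obstacle.} The delicate point is the reduction of the $S$-polynomials $S(f_k,f_m)$ produced by Lemma~\ref{lem:common-leading-monomial-cancellation}, through the chain of Lemma~\ref{lem:cover}. Three issues must be controlled.
\begin{itemize}
  \item The $\eta,\mu,\rho,\nu$ in \eqref{eq:tmp-coverIdentity} must all be hungry multipliers, that is, coprime to $p$.
  \item A hungrily reduced remainder may still have leading terms divisible by some $\lm(g_k)$, if reducing them would require a multiplier divisible by $p$. We must show such a remainder is nonetheless accounted for. Either it entered $G_p$ and its $S$-pairs were processed, or it is a scalar and was absorbed into $\chi_p$.
  \item $S$-pairs with elements in $\A$ carry special multipliers $\gcd(\lc(g),\chi_p)$. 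Those divisible by $p$ could break coprimality.
\end{itemize}
For the last issue, we would try to replace $\chi_p$ by its $p$-primary part $p^e$. We would then work modulo the other coprime factors, using a B\'ezout combination as at the end of the proof of Lemma~\ref{lem:compatible-representation}, so that only the $p$-adic valuation of $\lambda$ matters. For the second issue, we would argue by induction on signatures, using the incrementality of the POT order and the local basis below $T$.
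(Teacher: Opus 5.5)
Your first inequality is the paper's own argument: every scalar remainder lies in $I\cap K[x_1]=(\chi)$, so $\chi\mid\chi_p$. Your second inequality also follows the paper's route: apply the hungry form of Lemma~\ref{lem:compatible-representation} to $f=\chi$ to get $\lambda\chi=q_0\chi_p$, then use $p\nmid\lambda$. The paper dismisses $p\nmid\lambda$ with ``it is easy to see''. You correctly locate where this can fail: the special multipliers $\gcd(\lc(g),\chi_p)$, which Algorithm~\ref{alg:hungry-refinement} does not force to be coprime to $p$.

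\textbf{The gap.} Your proposed repair (pass to the $p$-primary part of $\chi_p$ and use a B\'ezout combination) cannot work, because the obstruction is real. Algorithm~\ref{alg:hungry-refinement} never forms $S$-pairs between elements of $G_p$ and the scalar $\chi_p$. When $p\mid\gcd(\lc(g),\chi_p)$, such an $S$-pair can have strictly smaller $p$-adic valuation than $\chi_p$. In that situation no argument yields $p\nmid\lambda$, and the proposition itself is false.

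\textbf{A counterexample.} Take $f_1=x_1x_2+1$ and $f_2=x_1^3x_2+x_1^3$ in $K[x_1][x_2]$.
\begin{itemize}
\item The eliminant. We have $f_2-x_1^2f_1=x_1^2(x_1-1)$, and $x_1\cdot(-x_2)\equiv1$ modulo $I$. Hence $I=\langle x_1-1,\,x_2+1\rangle$ and $\chi=x_1-1$.
\item The run. In both Algorithm~\ref{alg:propercover-stage1} and Algorithm~\ref{alg:hungry-refinement} (with $p=x_1$), the only semi-$S$-pair is $\bmf_2$, with signature $\basisE_2$. It is not covered: the principal syzygy signature has leading monomial $x_2\basisE_2$, which does not divide $\basisE_2$, and $\bmf_2$ itself violates condition~(2) of Definition~\ref{def:coef-cover}. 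One regular term reduction, $\bmf_2-x_1^2\bmf_1$, has least multiplier $1$ and gives the scalar $x_1^3-x_1^2$.
\item The outputs. Thus $\chi_0=\chi_p=x_1^2(x_1-1)$, the special multipliers are $x_1$ and $x_1^2$, and $\chi_{0,c}=x_1-1$. So $p=x_1$ divides $\chi_0/\chi_{0,c}=x_1^2$, yet $\mult_p(\chi_p)=2\neq0=\mult_p(\chi)$.
\end{itemize}

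This kills your step directly. The leading-monomial condition forces $q_k=0$ for the $g_k$ with $\lm(g_k)=x_2$, leaving $\lambda(x_1-1)=q_0x_1^2(x_1-1)$, so $x_1^2\mid\lambda$. The unprocessed pair $S(f_1,\chi_p)=x_1(x_1-1)$ is exactly what lowers the valuation. The example also refutes your remark that the hypothesis $p\mid\chi_0/\chi_{0,c}$ ensures $p\mid\chi$: here $p\nmid\chi$.

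\textbf{What could save the statement.} It needs an extra hypothesis, such as $p\in\Theta(\chi_p)$, in which case it follows at once from Lemma~\ref{lem:compatible-representation}. Alternatively, Algorithm~\ref{alg:hungry-refinement} must be changed to process the $S$-pairs of $G_p$ with the scalar remainders.
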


\begin{proof}
In Algorithm~\ref{alg:hungry-refinement}, the hungry covers and hungry reductions are performed with multipliers coprime to the selected irreducible factor \(p\). Applying  Lemma~\ref{lem:compatible-representation} to the output \((G_p,\chi_p,\Lambda_p)\) of Algorithm~\ref{alg:hungry-refinement} and to \(f=\chi\), we obtain
\(\lambda\chi=q_0\chi_p\) with
  \(q_0\in\A\setminus\{0\}\). It is easy to see that we also have \(p\nmid\lambda\).
Now \(\chi_p\) belongs to \(I\cap\A=(\chi)\) indicates that \(\chi\mid\chi_p\), and hence \(\mult_p(\chi)\le\mult_p(\chi_p)\).
On the other hand, \(\lambda\chi=q_0\chi_p\) and \(p\nmid\lambda\) yields
\[
  \mult_p(\chi)=\mult_p(q_0)+\mult_p(\chi_p)
  \ge\mult_p(\chi_p).
\]
Thus \(\mult_p(\chi_p)=\mult_p(\chi)\) as claimed.
\end{proof}

\subsection{Correctness and termination}

In this subsection we verify that the pre-bases obtained in our algorithm satisfy condition \eqref{eq:proper-basis-leading-term-ideal} in Definition \ref{def:proper-basis}  and prove the termination of our algorithm in Theorem \ref{thm:main-correctness}.

\begin{proposition}
\label{prop:modular-basis-condition}
Let \((q,\mathrm B_q)\) be a modular basis output by the main algorithm.  Then the modular basis condition \eqref{eq:proper-basis-leading-term-ideal} holds for $(q,B_q)$.

\end{proposition}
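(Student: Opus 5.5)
The modular bases produced by the main algorithm come in two kinds: \((\chi_{0,c},\pi_{\chi_{0,c}}(G_0)\setminus\{0\})\) from the first stage, and \((p^e,\pi_{p^e}(G_p)\setminus\{0\})\) with \(e=\mult_p(\chi_p)\) from a call to Algorithm~\ref{alg:hungry-refinement}. I will treat both cases by one argument based on the representation of Lemma~\ref{lem:compatible-representation}. Write \(q\) for the modulus, \(G\) for the polynomial parts of the corresponding pre-basis, and \(\chi_\ast\in\{\chi_0,\chi_p\}\) for the pre-eliminant. The first step is to check that \(q\) divides both \(\chi_\ast\) and \(\chi\). In the first case this is the definition of \(\chi_{0,c}\) together with the conclusion \(\chi_{0,c}\mid\chi\) of Lemma~\ref{lem:compatible-representation}. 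In the second case \(p^e\mid\chi_p\) by the choice of \(e\), and Proposition~\ref{prop:algorithm2-refinement} gives \(p^e\mid\chi\). Since every element of \(G\) lies in \(I\), we get \(\pi_q(G)\subset I_q\). Hence the inclusion \(\langle\LT(\mathrm B_q)\rangle\subseteq\langle\LT(I_q)\rangle\) is immediate, and only the reverse inclusion needs work.

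For the reverse inclusion, let \(\bar f=\pi_q(f)\ne0\) with \(f\in I\). I would first normalize the lift \(f\) so that \(\lm(f)=\lm(\bar f)\) and \(\lc(\bar f)=\pi_q(\lc(f))\). To do this, subtract \(q\cdot(\text{terms of } f \text{ whose coefficients vanish mod } q)\); this is legitimate because \(q\in I\), as \(q\mid\chi\in I\). Next apply Lemma~\ref{lem:compatible-representation} to obtain
\[
\lambda f=\sum_k q_k g_k+q_0\chi_\ast,
\]
with the leading monomial condition \eqref{eq:compatible-leading-condition} and \(\gcd(\lambda,\chi_{\ast,c})=1\). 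In the second case I need the sharper fact \(p\nmid\lambda\), which was observed in the proof of Proposition~\ref{prop:algorithm2-refinement}. In both cases \(\lambda\) is therefore a unit modulo \(q\). Reducing modulo \(q\) kills the \(q_0\chi_\ast\) term, since \(q\mid\chi_\ast\), and gives
\[
\bar f=\bar\lambda^{-1}\sum_k\bar q_k\bar g_k .
\]
Collect the indices \(k\) with \(\lm(q_kg_k)=\lm(f)\). For these we have \(\lt(\lambda f)=\sum_k \lt(q_k)\lt(g_k)\) in \(K[x_1][\xt]\), and therefore
\[
\lt(\bar f)=\bar\lambda^{-1}\sum_k \overline{\lt(q_k)}\,\overline{\lt(g_k)} .
\]
Each summand is a multiple of \(\lt(\pi_q(g_k))\) whenever \(\pi_q(\lc(g_k))\ne0\), so \(\lt(\bar f)\in\langle\LT(\mathrm B_q)\rangle\).

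The main obstacle is the case where \(\pi_q(\lc(g_k))=0\), or more generally where \(\lc(g_k)\) is a zero divisor modulo \(q\). Then \(\lt(\pi_q(g_k))\) is not the reduction of \(\lt(g_k)\), and the leading-term bookkeeping above breaks down. My plan is to use the special multipliers \(\gcd(\lc(g),\chi_\ast)\), which the algorithm inserts into \(\Lambda\). Let \(t\) be an irreducible factor of \(q\). In the first case \(t\) is coprime to all of \(\Lambda\), so \(t\nmid\lc(g_k)\) and \(\lc(g_k)\) is a unit modulo \(q\). In the second case I expect to argue through the hungry-refinement analogue: either \(p\nmid\lc(g_k)\), or the \(S\)-pair of \(\bmg_k\) with \(\chi_p\) was processed. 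In the latter situation, Lemma~\ref{lem:cover} together with the hungry conditions would let me replace \(g_k\) in the representation by elements with smaller leading monomial, again with multipliers prime to \(p\). If needed I would iterate this descent by well-ordering of \(\Mon(\xt)\), in the same way as in the proof of Lemma~\ref{lem:compatible-representation}. This step, making sure that the surviving leading coefficients are units modulo \(q\), is the one I expect to require the most care.
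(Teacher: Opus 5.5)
Your overall route matches the paper's: reduce the representation of Lemma~\ref{lem:compatible-representation} modulo $q$ and compare coefficients at $\lm(\bar f)$. The gap is in the normalization step. You replace a lift $f$ by one with $\lm(f)=\lm(\bar f)$, justified by ``$q\in I$, as $q\mid\chi\in I$''. This is backwards. $I\cap K[x_1]=(\chi)$ contains the multiples of $\chi$, not its divisors, so $q\in I$ only when $\chi\mid q$, i.e.\ essentially only when $q=\chi$. Without the normalization, a lift may have $\lm(f)\succ\lm(\bar f)$, because its top coefficients can vanish modulo $q$. Then the leading-monomial condition \eqref{eq:compatible-leading-condition} says nothing about $\lt(\bar f)$.

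The missing idea is the paper's device. Take $\bar f$ with coefficients reduced modulo $q$, pick $g\in I$ with $\pi_q(g)=\bar f$, and set $c=\chi/q$. Then $c(g-\bar f)\in(\chi)\subset I$, so $c\bar f\in I$ and $\lm(c\bar f)=\lm(\bar f)$. For $\pi_q(\lambda c)$ to be a unit you need $\gcd(\chi/q,q)=1$, i.e.\ $q$ must carry the full multiplicity in $\chi$ of each of its prime factors. For $q=p^e$ this is what Proposition~\ref{prop:algorithm2-refinement} provides. For $q=\chi_{0,c}$ it follows from $\chi\mid\chi_0$ together with $\lambda\chi=q_0\chi_0$ and $\gcd(\lambda,\chi_{0,c})=1$. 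You only established $q\mid\chi$, which is not enough.

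Second, the step you flag as the main obstacle is not one. If $\pi_q(\lc(g_k))\neq0$, then $\lt(\pi_q(g_k))=\pi_q(\lc(g_k))\lm(g_k)$, whether or not this coefficient is a zero divisor. If $\pi_q(\lc(g_k))=0$, then $\pi_q(\lc(q_k)\lc(g_k))=0$, and that index contributes nothing to the coefficient of $\lm(\bar f)$. So restrict the sum to $\Gamma=\{k:\lm(q_kg_k)=\lm(\bar f),\ \pi_q(\lc(q_kg_k))\neq0\}$, as the paper does. This immediately gives $\lt(\bar f)\in\langle\LT(\mathrm B_q)\rangle$, and your proposed descent through special multipliers and Lemma~\ref{lem:cover} is unnecessary. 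One small correction: your leading-coefficient identity in $K[x_1][\xt]$ must also include $\lt(q_0)\chi_\ast$ when $\lm(q_0)=\lm(f)$. It vanishes only after reduction modulo $q$.
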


\begin{proof}
The inclusion
\[
  \langle \LT(B_q)\rangle \subseteq \langle \LT(\pi_q(I))\rangle
\]
is immediate. We prove the reverse inclusion.
Let \((G^\star,\chi^\star,\Lambda^\star)\) be an output of Algorithm~\ref{alg:propercover-stage1} or Algorithm~\ref{alg:hungry-refinement}. And \(G_0^\star=\{g_1,\dots,g_s\}\) is the polynomial part of $G^\star$ that satisfies \(q\mid \chi\) and \(B_q=\{\pi_q(g_i):1\le i\le s,\ \pi_q(g_i)\neq 0\}\). By Proposition~\ref{prop:algorithm2-refinement}, \(\gcd(\chi/q,q)=1\).

Take \(0\neq \bar f\in \pi_q(I)\), we treat \(\bar f\) as in \( K[x_1][\xt]\) and it satisfies  \(\pi_q(\bar f)=\bar f\). Since \(\bar f\in \pi_q(I)\), there exists \(g\in I\) with \(\pi_q(g)=\bar f\), hence \(g-\bar f\in \langle q\rangle\). Put \(c=\chi/q\). Then \(c(g-\bar f)\in \langle \chi\rangle\subset I\), so \(c \bar f\in I\).
By Lemma~\ref{lem:compatible-representation} on \(c \bar f\) for \((G^\star,\chi^\star,\Lambda^\star)\), we obtain \(a_0,a_1,\dots,a_s\in K[x_1][\xt]\) and \(\lambda\in K[x_1]\setminus\{0\}\) with \(\gcd(\lambda,q)=1\), such that
\[
  \lambda c\bar f=\sum_{i=1}^s a_i g_i+a_0\chi^\star
\]
and
\[
  \lm(\bar f)=
  \max\bigl\{\max_{1\le i\le s}\{\lm(a_i g_i)\}, \lm(a_0)\bigr\}.
\]

 We get
\[
  \pi_q(\lambda c)\bar f=\sum_{i=1}^s \pi_q(a_i)\pi_q(g_i).
\]
Since \(\gcd(\lambda c,q)=1\), the element \(\pi_q(\lambda c)\) is a unit in \(K[x_1]/(q)\). Hence the term with the monomial  \(\lm(\bar f)\) in \(\sum_{i=1}^{s}\pi_q(a_i)\pi_q(g_i)\) has nonzero coefficient. If we set
\[
\Gamma=\{\,i:\lm(a_i g_i)=\lm(\bar f),\ \pi_q(\lc(a_i g_i))\neq 0\,\},
\]
then \(\Gamma\neq\varnothing\), and
\[
\lt(\bar f)=\pi_q(\lambda c)^{-1}\sum_{i\in\Gamma}\pi_q(\lt(a_i))\,\lt(\pi_q(g_i))\in \langle \LT(B_q)\rangle.
\]
Thus
\[
  \langle \LT(\pi_q(I))\rangle \subseteq \langle \LT(B_q)\rangle,
\]
and the desired equality \eqref{eq:proper-basis-leading-term-ideal} follows.
\end{proof}

\begin{theorem}
\label{thm:main-correctness}
Suppose the main algorithm outputs the pre-proper basis \(\mathcal B\).
Then the main algorithm terminates, and \(\mathcal B\) is the proper basis of \(I\).
\end{theorem}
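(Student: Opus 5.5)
The proof splits into termination of Algorithms~\ref{alg:propercover-stage1} and~\ref{alg:hungry-refinement}, termination of the main while-loop, and correctness of the output, the last reducing to the product identity $\chi=\prod_{(q,\mathrm B_q)\in\mathrm B}q$ together with Proposition~\ref{prop:modular-basis-condition}.

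\textbf{Inner termination.} For Algorithms~\ref{alg:propercover-stage1} and~\ref{alg:hungry-refinement}, adapt the GVW termination argument to $K[x_1][\xt]$. Every new element $\bmr$ added to $G_0$ (or $G_p$) is not covered, so the pair $(\lm(\sig(\bmr)),\lm(\poly(\bmr)))$ is not dominated componentwise by the corresponding pair of any earlier element or syzygy. By Dickson's lemma applied to $\Mon(\xt)^2\times\{1,\dots,m\}$, only finitely many such pairs can occur. Coefficients in $K[x_1]$ must also be handled: for a fixed pair of leading monomials, successive non-covered elements can only change the leading coefficient by passing to proper divisors. This uses the divisibility of $L/c_f$ in Definition~\ref{def:coef-cover}, and since $K[x_1]$ is Noetherian and a PID, this can happen only finitely often. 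Each regular reduction terminates because the monomial order is a well-order and each step strictly lowers the leading monomial. Zero-dimensionality ensures that some scalar remainder appears, so $\chi_0\neq0$: Lemma~\ref{lem:compatible-representation} applied to $f=\chi$ yields $\lambda\chi=q_0\chi_0$, which forces $\chi_0\ne0$.

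\textbf{Outer termination.} Each pass of the while-loop chooses an irreducible $p\mid\chi_q$. By Proposition~\ref{prop:algorithm2-refinement}, $\mult_p(\chi_p)=\mult_p(\chi)$, and since $\chi\mid\chi_p$ the gcd update yields $\mult_p(\widehat\chi)=\mult_p(\chi)$. I also plan to argue that $p\mid\chi_{p,c}$, i.e.\ $p\in\Theta(\chi_p)$, because all multipliers recorded in $\Lambda_p$ are coprime to $p$ by construction of hungry cover and hungry reduction; the special multipliers need a separate check, since they might be divisible by $p$. It follows that $p^{\mult_p(\chi)}\mid\chi_c$, so $p\nmid\chi_q$ after the update. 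Because $\widehat\chi$ only decreases under divisibility, $\chi_q$ loses at least one irreducible factor per pass and the loop ends after at most $\deg\chi_0$ passes.

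\textbf{Correctness.} By Lemma~\ref{lem:compatible-representation}, $\chi_{0,c}\mid\chi$. Combined with $\chi\mid\chi_0$, every irreducible in $\Theta(\chi_0)$ has exactly the same multiplicity in $\chi_0$ and in $\chi$. The inserted moduli $\chi_{0,c}$ and $p^{e}$ with $e=\mult_p(\chi_p)=\mult_p(\chi)$ are therefore pairwise coprime prime-power-exact factors of $\chi$. To see that their product equals $\chi$, note that every irreducible factor of $\chi$ divides $\chi_0$ and either lies in $\Theta(\chi_0)$ or is eventually processed by the loop. Proposition~\ref{prop:modular-basis-condition} then supplies the modular basis condition for each pair, and Definition~\ref{def:proper-basis} gives that $\mathcal B$ is a proper basis.

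\textbf{Main obstacle.} The hardest step is the coefficient part of inner termination, namely that infinitely many non-covered elements cannot differ only in their $K[x_1]$ coefficients. I would first try to show that a new element with the same monomial data as an old one must have a leading coefficient that is a proper divisor of a previously attained one, and then bound chain length by degree.

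A further technical gap in the outer loop is excluding $p$ from dividing a special multiplier $\gcd(\lc(g),\chi_p)$. Should that case arise, $p^{e}$ is nonetheless inserted into $\mathrm B$ and the loop still progresses, because $\widehat\chi/\chi_c$ with the recorded $p^e$ accounted for drops $p$. I would argue this by tracking the product of inserted moduli rather than $\chi_c$ alone.
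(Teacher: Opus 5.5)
Your outline follows the paper's own proof. You prove termination of Algorithms~\ref{alg:propercover-stage1} and~\ref{alg:hungry-refinement}, then termination of the outer loop, then invoke Lemma~\ref{lem:compatible-representation}, Proposition~\ref{prop:algorithm2-refinement} and Proposition~\ref{prop:modular-basis-condition}. The multiplicity bookkeeping you add is correct: $\chi\mid\widehat\chi$ throughout, $\mult_r(\chi_0)=\mult_r(\chi)$ for $r\in\Theta(\chi_0)$, and $\chi_0\neq0$ from $\lambda\chi=q_0\chi_0$. But the steps you leave as plans are genuine gaps, and the paper's one-line treatment (``Noetherian'', ``finitely many factors'') does not fill them.

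For inner termination, the coefficient mechanism you cite does not exist. Proper cover, i.e.\ conditions (1)--(2) of Definition~\ref{def:coef-cover}, imposes no condition on coefficients. The hungry condition (3) is coprimality of $L/c_f$ with $p$, not a divisor chain. For $G_0$ your Dickson argument only handles the case where the signature ratio $\tau$ and the polynomial ratio differ. When $\lm(\sig(\bmr))=\tau\lm(\sig(\bmg))$ and $\lm(r)=\tau\lm(g)$, the new element is neither covered (condition (2) is strict) nor regularly reducible by $\tau\bmg$. This is because $\pot$ compares only monomials, so $\sig(\tau\bmg)\pot\sig(\bmr)$ fails. This is exactly the case where only coefficients distinguish the new element, and nothing in the algorithm bounds it. For $G_p$ even the unequal cases fail. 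Hungry reduction and hungry cover are forbidden when the relevant multiplier is divisible by $p$. So an older element whose polynomial or signature leading coefficient has larger $p$-valuation can dominate a new one componentwise.

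You were right to worry about special multipliers, but that case cannot be excluded, and your remedy does not apply to the algorithm as written. Input pairs are never removed from $G_0$ or $G_p$. So if some $f_i$ has $p\mid\gcd(\lc(f_i),\chi)$, then $p$ divides a special multiplier in every run. Hence $p\notin\Theta(\chi_0)$ and $p\notin\Theta(\chi_{p'})$ for every $p'$, so $\chi_c$ never acquires $p$. Since $\chi\mid\widehat\chi$ always, $p\mid\chi_q$ forever, and the loop of Algorithm~\ref{alg:ProperCover} never stops.

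``Tracking the product of inserted moduli'' changes the update of $\chi_q$, so it proves termination of a different algorithm. Moreover, in that case Lemma~\ref{lem:compatible-representation} yields only $\gcd(\lambda,\chi_{p,c})=1$, not $p\nmid\lambda$. The latter is what the proof of Proposition~\ref{prop:modular-basis-condition} uses for the pair $(p^e,\pi_{p^e}(G_p))$.

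Finally, your claim that every irreducible factor of $\chi$ outside $\Theta(\chi_0)$ ``is eventually processed by the loop'' is unjustified. The update $\chi_c\gets\operatorname{lcm}(\chi_c,\chi_{p,c})$ absorbs every $r\in\Theta(\chi_p)$. From $\chi_{p,c}\mid\chi\mid\chi_p$ you get, after the update, $\mult_r(\chi_c)=\mult_r(\widehat\chi)=\mult_r(\chi)$. So a factor $r\neq p$ of $\chi_q$ can drop out of $\chi_q$ without ever being selected. Yet only $(p^e,\pi_{p^e}(G_p)\setminus\{0\})$ is inserted into $\mathrm B$, so $\prod_{(q,\mathrm B_q)\in\mathrm B}q$ misses $r^{\mult_r(\chi)}$. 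The paper's proof makes the same tacit assumption (``for each irreducible factor $p$ of $\chi/\chi_{0,c}$, \dots{} Algorithm~\ref{alg:hungry-refinement} outputs''). To get a correct statement you would have to modify the main loop, for instance by enlarging $\chi_c$ by $p^e$ only, and treat primes dividing special multipliers separately.
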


\begin{proof}

The termination of Algorithm~\ref{alg:propercover-stage1} and Algorithm~\ref{alg:hungry-refinement} follows from the fact that the algebra $\A[\xt]$ is Noetherian. The main algorithm terminates due to the fact that \(\widehat\chi\) in the main algorithm has only finitely many irreducible and coprime factors.

Now we prove that the outputs of main algorithm  are correct. By Lemma~\ref{lem:compatible-representation}, the compatible factor \(\chi_{0,c}\) divides \(\chi\).  For each irreducible factor \(p\) of \(\chi/\chi_{0,c}\), Proposition~\ref{prop:algorithm2-refinement} shows that Algorithm~\ref{alg:hungry-refinement} outputs exactly the factor \(p^{\mult_p(\chi)}\). We enumerate $\chi_{0,c}$ and all these factors $p^{\mult_p(\chi)}$ as \(\{q_1,\ldots,q_N\}\) that are pairwise coprime and satisfy
\[
  \chi=\prod_{k=1}^{N}q_k .
\]
For each pair \((q_k,B_{q_k})\), Proposition~\ref{prop:modular-basis-condition} gives
\[
  \langle\LT(\pi_{q_k}(I))\rangle=\langle\LT(B_{q_k})\rangle
\]
in \((K[x_1]/(q_k))[\xt]\). Thus the pre-proper basis  \(\mathcal B=\bigcup_{k=1}^N(B_{q_k}\cup\{q_k\})\) is a proper basis of \(I\) by  Definition~\ref{def:proper-basis}.
\end{proof}

We can obtain the classical Gr\"obner basis of the ideal $I\subset\K[x_1,x_2,\dotsc,x_n]$ over the field $K$ from the proper basis of $I\subset\K[x_1][\xt]$ defined above.
Please refer to \cite[Theorem~4.5.12]{AdamsLoustaunau} and \cite[Theorem~4.5.9]{AdamsLoustaunau} on this respect.

\section{Experiments}

This section evaluates the computational efficiency of the Proper-Cover
algorithm by comparing it with the GVW and the F5 algorithms on standard Cyclic and Katsura
benchmark systems, together with seven random systems:
\vspace{-0.4em}

\begin{enumerate}[
  leftmargin=5.2em,
  labelsep=0.35em,
  itemsep=0.05em,
  topsep=0.2em,
  parsep=0pt
]
\small

\item[$Random\text{-}1 \colon$]
$\langle -z^2(z+1)^3x^2+y^2,\ z^4(z+1)^6x-y^4,\ -x^2y^3+y^4+z^4(z-1)^5\rangle,$

\item[$Random\text{-}2 \colon$]
$\langle -z^2(z+1)^3x^3+y^2,\ z^4(z+1)^6x-y^3,\ -x^2y^2+y^3+z^4(z-1)^5\rangle,$

\item[$Random\text{-}3 \colon$]
$\langle -y^4z^2+x^3+x,\ y^6-x^2z(z-1)+z^4,\ y^5+z^2-x-1\rangle,$

\item[$Random\text{-}4 \colon$]
$\langle -y^4z^2+x^2+x,\ -z^2x+y^5+2,\ -x+y^2+z^3-1\rangle,$

\item[$Random\text{-}5 \colon$]
$\langle -y^4z^2+x^3+x,\ y^6-x^2z(z^3-2)+2z^4,\ y^5+z^4-x^2-z\rangle,$

\item[$Random\text{-}6 \colon$]
$\langle
3x^3z+8x^2y^2+x^2yz+5xy^3,\ 
2y^3z^2+13y^2z^3+5yz^4+x^3,\ 
xz^2+12y^3+8x^2+3,\\
\phantom{3x^3z+8x^2y^2+x^2yz+5xy^3,\ 2y^3z^2+13y^2z^3+5yz^4+x^3,\ xz^2+12} 18xy^3z^2+y^3z^3+7x^2y^2\rangle,$

\item[$Random\text{-}7 \colon$]
$\langle -z^2(z+1)^3x^3+y,\ z^4(z+1)^6x-y^3,\ -x^2y^3+y^3+z^4(z-1)^5\rangle.$

\end{enumerate}

\vspace{-0.3em}
The experiments are performed under three monomial orders and all algorithms
are implemented in Maple without  additional optimization. The time limit is 3600 seconds.
All computations are carried out in Maple 18 on a 64-bit Windows 11 machine
with an 11th Gen Intel(R) Core(TM) i7-11800H @ 2.30GHz processor and 16 GB RAM.

\begin{table}[H]
\centering
\caption{Running times in seconds.}
\label{tab:experiments}
\scriptsize
\setlength{\tabcolsep}{4.2pt}
\renewcommand{\arraystretch}{1.08}

\resizebox{0.92\textwidth}{!}{%
\begin{tabular}{lccccccccc}
\toprule
& \multicolumn{3}{c}{\text{tdeg}}
& \multicolumn{3}{c}{\text{grlex}}
& \multicolumn{3}{c}{\text{plex}} \\
\cmidrule(lr){2-4}
\cmidrule(lr){5-7}
\cmidrule(lr){8-10}
System
& PC & GVW & F5
& PC & GVW & F5
& PC & GVW & F5 \\
\midrule
Random-1
& 0.515 & 1.840 & \textit{timeout}
& 0.516 & 2.768 & \textit{timeout}
& 6.078 & 57.019 & \textit{timeout} \\

Random-2
& 0.391 & 2.371 & \textit{timeout}
& 0.344 & 3.571 & \textit{timeout}
& 0.484 & 77.666 & \textit{timeout} \\

Random-3
& 2.579 & 0.058 & 10.385
& 2.703 & 0.163 & 9.737
& 7.140 & 716.402 & \textit{timeout} \\

Random-4
& 1.234 & 0.029 & 14.066
& 1.109 & 0.120 & 12.770s
& 5.594 & 60.000 & \textit{timeout} \\

Random-5
& 5.922 & 0.024 & 5.258
& 5.375 & 0.040 & 5.573
& 21.531 & 3548.083 & \textit{timeout} \\

Random-6
& 44.875 & 4.083 & \textit{timeout}
& 59.468 & 3.507 & \textit{timeout}
& 310.735 & 80.583 & \textit{timeout} \\

Random-7
& 0.266 & 4.426 & \textit{timeout}
& 0.593 & 7.961 & \textit{timeout}
& 0.985 & 95.732 & \textit{timeout} \\

\midrule
Katsura-2
& $<10^{-3}$ & 0.038 & 0.183
& $<10^{-3}$ & 0.024 & 0.045
& $<10^{-3}$ & 0.045 & 0.057 \\

Katsura-3
& 0.016 & 0.058 & 0.184
& 0.016 & 0.078 & 0.903
& 0.156 & 0.197 & 116.300 \\

Katsura-4
& 0.297 & 0.151 & 3.507
& 0.485 & 2.014 & \textit{timeout}
& 14.734 & 19.500 & \textit{timeout} \\

Katsura-5
& 3.969 & 0.782 & 148.100
& 22.453 & \textit{timeout} & \textit{timeout}
& \textit{timeout} & \textit{timeout} & \textit{timeout} \\
\midrule
Cyclic-2
& 0.016 & 0.035 & 0.037
& $<10^{-3}$ & 0.017 & 0.016
& $<10^{-3}$ & 0.042 & 0.045 \\

Cyclic-3
& $<10^{-3}$ & 0.031 & 0.044
& 0.016 & 0.019 & 0.021
& $<10^{-3}$ & 0.041 & 0.018 \\

Cyclic-5
& 5.734 & 1.123 & \textit{timeout}
& 6.532 & 3.985 & \textit{timeout}
& \textit{timeout} & 39.700 & \textit{timeout} \\

\bottomrule
\end{tabular}%
}
\end{table}

The experimental results validate the core contribution of this work: the Proper-Cover algorithm offers an efficient and robust framework for ideal basis computation, outperforming the classic F5 and GVW algorithms. Across all test cases, Proper-Cover consistently outperforms F5 in both computational speed and the number of solved instances.

Furthermore, Proper-Cover delivers prominent advantages on hard instances and complex monomial orders. As illustrated by the Katsura-5 benchmark with the grlex order, Proper-Cover finishes the computation in 22.453 seconds, whereas both GVW and F5 terminate due to time limits. And Proper-Cover delivers clear speedups over GVW for lexicographic (plex) order in most cases.

\end{document}